\documentclass[aps,pra,twocolumn,superscriptaddress,longbibliography]{revtex4-2}

\usepackage{amsmath,amssymb,amsfonts,mathtools}
\usepackage{amsthm}
\usepackage{physics}
\usepackage{braket}
\usepackage{bbm}
\usepackage{mathrsfs}
\usepackage{xcolor}
\usepackage{graphicx}
\usepackage{subcaption}
\usepackage{multirow}
\usepackage{tabularx}
\usepackage{enumitem}
\usepackage{microtype}
\usepackage{tikz-cd}
\usepackage{proof}
\usepackage{thmtools,thm-restate}
\usepackage{algorithmicx}
\usepackage{algpseudocode}
\usepackage{qcircuit}
\usepackage{float}
\usepackage{soul}
\usepackage{url}
\usepackage[colorlinks,citecolor=blue,linkcolor=blue,urlcolor=blue,hypertexnames=false]{hyperref}

\DeclareMathAlphabet{\mathpzc}{OT1}{pzc}{m}{it}
\DeclareMathOperator{\aff}{aff}
\DeclareMathOperator{\STAB}{STAB}

\newtheorem{theorem}{Theorem}[section]

\newtheorem{corollary}[theorem]{Corollary}

\newcounter{prob}
\newtheorem{problem}[prob]{Problem}

\newcommand{\eq}[1]{\hyperref[eq:#1]{(\ref*{eq:#1})}}
\renewcommand{\sec}[1]{\hyperref[sec:#1]{Sec.~\ref*{sec:#1}}}
\newcommand{\app}[1]{\hyperref[app:#1]{Appendix~\ref*{app:#1}}}
\newcommand{\fig}[1]{\hyperref[fig:#1]{Fig.~\ref*{fig:#1}}}
\newcommand{\tbl}[1]{\hyperref[tbl:#1]{Table~\ref*{tbl:#1}}}
\newcommand{\thm}[1]{\hyperref[thm:#1]{Theorem~\ref*{thm:#1}}}
\newcommand{\cor}[1]{\hyperref[cor:#1]{Corollary~\ref*{cor:#1}}}
\newcommand{\prop}[1]{\hyperref[prop:#1]{Proposition~\ref*{prop:#1}}}
\newcommand{\lem}[1]{\hyperref[lem:#1]{Lemma~\ref*{lem:#1}}}
\newcommand{\prob}[1]{\hyperref[prob:#1]{Problem~\ref*{prob:#1}}}

\newcommand{\safeincludegraphics}[2][]{\IfFileExists{#2}{\includegraphics[#1]{#2}}{\fbox{\parbox[c][0.18\textheight][c]{0.42\linewidth}{\centering Missing figure:\\#2}}}}

\begin{document}

\title{Affine-Profile Stabilizer Thresholds for Magic in Codeword-Stabilized Quantum Codes}

\author{Li-Yi Hsu}
\affiliation{Department of Physics, Chung Yuan Christian University, Chungli 32081, Taiwan}
\affiliation{Physics Division, National Center for Theoretical Sciences, Taipei 106319, Taiwan}

\author{En-Jui Kuo}
\affiliation{Department of Electrophysics, National Yang Ming Chiao Tung University, Hsinchu, Taiwan, R.O.C.}

\begin{abstract}
Codeword-stabilized quantum codes give a unified graph-state description of stabilizer and nonadditive quantum error-correcting codes.  Although each individual CWS word state is stabilizer, coherent superpositions of different word states can be nonstabilizer.  We develop a CWS-adapted magic-witness framework that isolates this codeword coherence and converts it into certified lower bounds on robustness of magic.

The main result is an exact reduction of the stabilizer threshold of a natural CWS coherence witness to a finite-geometric problem over the classical CWS word set.  For general weighted superpositions, the threshold is computed by enumerating affine intersections and affine-quadratic phases.  For equal-weight superpositions, the phase optimization collapses, and the threshold is determined entirely by how many CWS words can lie in affine flats of each dimension.  Thus a quantum optimization over stabilizer states becomes a classical affine-incidence problem.

This reduction yields a fixed-parameter algorithm, an analytic lower bound for an infinite union-stabilizer family, and exact rational certificates for several standard nonadditive CWS examples.  The framework also clarifies why exact enumeration fails for large structured families and identifies the remaining task as an affine-intersection problem.  The result provides a geometric mechanism by which nonlinear CWS word sets generate certifiable magic.
\end{abstract}

\maketitle


\section{Introduction}
\label{sec:introduction}

Entanglement~\cite{Horodecki2009}, coherence~\cite{Streltsov2017}, Bell nonlocality~\cite{Brunner2014}, and nonstabilizerness or magic~\cite{PhysRevLett.118.090501,Veitch2014,hamaguchi2024handbook} are distinct quantum resources.  Magic is central in fault-tolerant quantum computation: stabilizer states and Clifford operations are efficiently simulable by the Gottesman--Knill theorem~\cite{gottesman1998heisenberg,aaronson2004improved}, whereas non-Clifford resources are required for universal quantum computation.  Magic and stabilizer structure have also appeared in randomness generation~\cite{VairogsYan2025}, communication complexity~\cite{ChowdhuryEtAl2025}, many-body dynamics and monitored circuits~\cite{LiuWinter2022,Russomanno2025,TurkeshiDymarskySierant2025,BeraSchiro2025,Hartse2025,KorbanyGullansPiroli2025,Niroula2024,Bejan2024,Fux2024,Suzuki2025,SierantTurkeshi2026,Turkeshi2025Spreading,Leone2021Chaos,Varikuti2026}, symplectic and phase-space methods~\cite{gross2006hudson,hostens2005stabilizer,HeinrichGross2019,LeoneOlivieroHamma2022,Macedo2025,mele2026symplectic,crew2025magic,hughes2014quantum}, and experimental or numerical diagnostics~\cite{chabaud2020stellar,hahn2026assessing,weinbub2018recent,harriman1988some,takahashi1986wigner}.

The robustness of magic (RoM) \cite{PhysRevLett.118.090501} quantifies the distance of a state from the convex hull of stabilizer states.  Exact RoM is a linear program over all stabilizer states, and therefore becomes expensive rapidly.  A common way to obtain lower bounds is to use a Hermitian witness: if an observable has a large expectation on a target state but a bounded value on all stabilizer states, it certifies nonstabilizerness.  Bell operators fit this general pattern after fixing local measurement observables, although the threshold relevant for magic is the stabilizer threshold, not the local-hidden-variable threshold~\cite{brunner2012testing,Baccari2020,bowles2018device,mckague2011self,kuo2024self}.  We keep this Bell-inspired viewpoint as motivation, but the main objects in this paper are Pauli witnesses adapted to CWS codes.

Recent works have also emphasized that tractable magic certification often requires exploiting additional combinatorial structure.  One approach starts from a limited Pauli measurement set and studies the corresponding reduced stabilizer polytope; there the relevant object is the frustration graph of the measured Pauli observables, together with sign-dependency constraints from the stabilizer formalism~\cite{liu2026graph}.  A separate and very recent work studies nonstabilizerness of CWS code states through stabilizer R\'enyi entropy and additive combinatorics, where the relevant classical data are translation-collision statistics, difference multiplicities, and related energy quantities of the CWS word set~\cite{LiuXu2026CWSMagic}.  Our work is complementary to both.  We construct a Pauli-measurable CWS codeword-coherence witness and compute the stabilizer threshold entering a robustness-of-magic lower bound.  In the equal-weight case, this threshold is governed by affine-flat intersections of the CWS word set, rather than by frustration graphs, collision counts, or parallelogram-type additive energies.

A CWS code in standard form is specified by a graph state and a classical word set.  Its word states are Pauli translates of the graph state, hence each individual word state is stabilizer.  This makes CWS codes a natural setting in which to ask where magic can arise.  It cannot arise from a single word state, and it is not certified merely by membership in the CWS code space.  We show that the relevant resource is the coherent interference among different word states.  To isolate this resource, we define a CWS codeword-coherence witness by subtracting the diagonal codeword population from the projector onto a coherent CWS superposition.

The main technical result is a CWS-specific simplification of the stabilizer-threshold problem.  In the graph basis, every stabilizer state has affine support and affine-quadratic phase.  Hence the expectation value of the CWS coherence witness depends only on which CWS words are contained in an affine flat, together with the phase induced on that intersection.  For general weighted superpositions this gives an exact fixed-parameter algorithm in the number of CWS words.  For equal-weight superpositions the phase optimization disappears, and the stabilizer threshold is exactly the affine-intersection profile of the classical word set.

This result gives a concrete geometric interpretation.  The quantum optimization over all stabilizer states is converted into the following finite-geometric question: how many CWS words can lie in an affine flat of a given dimension?  This affine profile determines the exact stabilizer threshold for equal-weight CWS coherence witnesses.  We apply the formula to an infinite binary union-stabilizer family, where it gives an analytic RoM lower bound growing linearly with block length, and to several standard nonadditive CWS examples, where it yields exact rational certificates.

Our contributions are fourfold.  First, we formulate a witness-induced lower bound on robustness of magic that applies to any calibrated Hermitian observable with a known stabilizer threshold.  Second, we introduce a CWS codeword-coherence witness for weighted superpositions of CWS word states.  Third, we prove that the stabilizer threshold of this witness admits an exact fixed-parameter algorithm in the number of CWS words, and that in the equal-weight case it collapses to the affine-intersection profile of the classical word set.  Fourth, we apply the formula to QEC families and standard nonadditive CWS examples, obtaining analytic and exact rational certificates.

The paper is organized as follows.  Section~\ref{sec:preliminaries} introduces stabilizer states, graph states, CWS codes, and affine geometry over binary vector spaces.  Section~\ref{sec:rom-bound} gives the witness-induced RoM lower bound and explains the Bell-type motivation.  Section~\ref{sec:cws-general} defines general weighted CWS coherence witnesses.  Section~\ref{sec:equal-weight} proves the equal-weight affine-profile formula.  Section~\ref{sec:qec-families} applies the method to QEC families and standard nonadditive CWS examples.  The appendices contain the RoM proof and complexity accounting, the graph-basis derivation, Bell-type verification examples and figures, CWS data used in the computations, and NISQ measurement and noise discussions.

\section{Preliminaries}
\label{sec:preliminaries}

\subsection{Stabilizer states and stabilizer polytope}
\label{subsec:prelim-stabilizer}

Let \(\mathcal P_n\) be the \(n\)-qubit Pauli group, with global phases ignored when no confusion can arise.  A pure stabilizer state is the unique common \(+1\)-eigenstate of a maximal abelian subgroup \(S\le \mathcal P_n\) generated by \(n\) independent commuting Pauli operators and not containing \(-I\).  We denote the convex hull of pure stabilizer states by
\begin{align}
    \STAB_n
    :=
    \operatorname{conv}\{\ket{\phi}\bra{\phi}:\ket{\phi}\text{ is a }n\text{-qubit stabilizer state}\}.
    \label{eq:stab-polytope-def}
\end{align}
The Clifford group is the normalizer of the Pauli group, and Clifford unitaries map stabilizer states to stabilizer states.  This observation is used below when moving between the computational basis and the graph basis.
We use the affine-quadratic normal form for stabilizer states~\cite{dehaene2003clifford,gross2007lu}. If \(\ket{\eta}\) is any stabilizer state, then there exist an affine support \(V+t\subseteq\mathbb F_2^n\), with \(V\le\mathbb F_2^n\) an \(r\)-dimensional linear subspace, and quadratic functions \(q\) over $\mathbb{F}_2^n$ and linear function \(\ell\) over $\mathbb{F}_2^n$ such that
\begin{align}
    \ket{\eta}
    =
    2^{-r/2}
    \sum_{x\in V+t}
    i^{\ell(x)}(-1)^{q(x)}\ket{x}.
    \label{eq:stab-affine-quadratic-prelim}
\end{align}
The precise normal form is reviewed in Appendix~\ref{app:graph-basis-proof} in the graph-basis setting used for CWS codes.

\subsection{Graph states and graph basis}
\label{subsec:prelim-graph-states}

Let \(G=(V,E)\) be a simple graph on \(n\) vertices.  The associated graph state is
\begin{align}
    \ket{G}
    =
    \prod_{\{i,j\}\in E}\mathrm{CZ}_{ij}\ket{+}^{\otimes n}.
    \label{eq:graph-state-def}
\end{align}
Equivalently, \(\ket{G}\) is the unique common \(+1\)-eigenstate of the graph-state stabilizer generators
\begin{align}
    K_j
    =
    X_j\prod_{k\in N(j)}Z_k,
    \qquad
    j=1,\ldots,n,
    \label{eq:graph-stabilizer-generator}
\end{align}
where \(N(j)\) is the neighborhood of vertex \(j\).  For \(u=(u_1,\ldots,u_n)\in\mathbb F_2^n\), define
\begin{align}
    Z^u
    :=
    Z_1^{u_1}\cdots Z_n^{u_n},
    \qquad
    \ket{u_G}:=Z^u\ket{G}.
    \label{eq:graph-basis-def}
\end{align}
The states \(\{\ket{u_G}:u\in\mathbb F_2^n\}\) form an orthonormal graph basis.  If
\begin{align}
    U_G
    :=
    \left(\prod_{\{i,j\}\in E}\mathrm{CZ}_{ij}\right)H^{\otimes n},
    \label{eq:graph-basis-clifford}
\end{align}
then
\begin{align}
    U_G\ket{u}=\ket{u_G}.
    \label{eq:graph-basis-unitary-action}
\end{align}
Since \(U_G\) is Clifford, transforming a stabilizer state by \(U_G\) or \(U_G^\dagger\) preserves stabilizer structure.

\subsection{CWS codes in standard form}
\label{subsec:prelim-cws}

A codeword-stabilized (CWS) code in standard form is specified by a graph state \(\ket{G}\) and a classical word set \(\mathcal C\subseteq\mathbb F_2^n\).  The CWS code space is
\begin{align}
    \mathcal Q(G,\mathcal C)
    =
    \operatorname{span}\{\ket{c_G}:c\in\mathcal C\},
    \qquad
    \ket{c_G}=Z^c\ket{G}.
    \label{eq:cws-code-space}
\end{align}
The code projector is
\begin{align}
    \Pi_{\mathcal C}
    =
    \sum_{c\in\mathcal C}\ket{c_G}\bra{c_G}.
    \label{eq:cws-code-projector}
\end{align}
The CWS formalism unifies stabilizer and nonadditive quantum codes~\cite{Rains1997,Cross2009,Chuang2009,YenHsu2009}.  If \(K=|\mathcal C|\) is not a power of two, then the CWS code is necessarily nonadditive.  Every individual word state \(\ket{c_G}\) is stabilizer, because it is obtained from \(\ket{G}\) by a Pauli operator.  The resource considered here is therefore not the magic of individual word states, but the coherence among them.

For normalized amplitudes \(\alpha=(\alpha_c)_{c\in\mathcal C}\), consider the weighted CWS superposition
\begin{align}
    \ket{\Psi_\alpha}
    =
    \sum_{c\in\mathcal C}\alpha_c\ket{c_G},
    \qquad
    \sum_{c\in\mathcal C}|\alpha_c|^2=1.
    \label{eq:weighted-cws-state-prelim}
\end{align}
The equal-weight case is
\begin{align}
    \ket{\Psi_{\mathcal C}}
    =
    \frac{1}{\sqrt K}
    \sum_{c\in\mathcal C}\ket{c_G},
    \qquad
    K=|\mathcal C|.
    \label{eq:equal-cws-state-prelim}
\end{align}

We finally introduce the finite-geometric quantity that will control the equal-weight CWS threshold.
\subsection{Affine flats and affine profiles}
\label{subsec:prelim-affine}

An affine flat in \(\mathbb F_2^n\) is a translate \(A=t+V\), where \(V\le\mathbb F_2^n\) is a linear subspace.  Its dimension is \(\dim A:=\dim V\).  For a nonempty subset \(D=\{d_1,\ldots,d_m\}\subseteq\mathbb F_2^n\), the smallest affine flat containing \(D\), denoted by \(\aff(D)\), is
\begin{align}
    \aff(D)
    =
    d_1+
    \operatorname{span}_{\mathbb F_2}\{d_2+d_1,\ldots,d_m+d_1\}.
    \label{eq:affine-hull-def}
\end{align}
It is unique and satisfies
\begin{align}
    \dim\aff(D)
    \le
    |D|-1.
    \label{eq:affine-hull-dimension-bound}
\end{align}
For a CWS word set \(\mathcal C\), define its affine profile by
\begin{align}
    M_r(\mathcal C)
    :=
    \max_{\substack{A\subseteq\mathbb F_2^n\\ A\;{\rm affine}\\ \dim A=r}}
    |A\cap\mathcal C|.
    \label{eq:affine-profile-def}
\end{align}
This finite-geometric quantity will determine the equal-weight stabilizer threshold in Theorem~\ref{thm:equal-weight-affine-profile-main}.  It obeys the elementary bounds
\begin{align}
    \max\!\left(1,\left\lceil |\mathcal C|2^{r-n}\right\rceil\right)
    \le
    M_r(\mathcal C)
    \le
    \min(|\mathcal C|,2^r),
    \label{eq:trivial-bounds}
\end{align}
where the lower bound follows by averaging over affine \(r\)-flats and the upper bound is immediate from the size of an affine \(r\)-flat.


\section{Robustness of magic and witness lower bounds}
\label{sec:rom-bound}

We first recall a simple witness principle for lower-bounding robustness of magic.  The statement is elementary, but it is useful because it separates the certification problem into two parts: choosing an observable with a large target expectation value and computing its stabilizer threshold.

The robustness of magic of an \(n\)-qubit state \(\rho\) is \cite{webster2022universal}
\begin{align}
    R(\rho)
    :=
    \min_{\sigma_+,\sigma_-\in\STAB_n}
    \left\{2p+1\;\middle|\;
    \rho=(p+1)\sigma_+-p\sigma_-,\;p\ge0\right\}.
    \label{eq:rom-def-main}
\end{align}
Equivalently, if \(\{\sigma_j\}\) denotes the set of pure stabilizer states, then
\begin{align}
    R(\rho)
    =
    \min_x
    \left\{
    \|x\|_1\;\middle|\;
    \rho=\sum_j x_j\sigma_j
    \right\}.
    \label{eq:rom-l1-main}
\end{align}
This form makes clear why exact RoM is difficult: the number of stabilizer states is exponential in \(n^2\).  The following theorem turns any calibrated Hermitian observable with a known stabilizer threshold into a RoM lower bound, in the same spirit as witness-based approaches to nonstabilizerness~\cite{Macedo2025}.

\begin{theorem}[Witness-induced RoM lower bound]
\label{thm:witness-rom-bound}
Let \(B\) be a Hermitian operator and define
\begin{align}
    \beta_B(\rho)
    :=
    \Tr(B\rho),
    \qquad
    \beta_{\rm stab}(B)
    :=
    \max_{\sigma\in\STAB_n}|\Tr(B\sigma)|.
    \label{eq:witness-beta-defs}
\end{align}
Then
\begin{align}
    R(\rho)
    \ge
    \frac{|\beta_B(\rho)|}{\beta_{\rm stab}(B)}.
    \label{eq:witness-rom-lower-bound-main}
\end{align}
In particular, if \(|\Tr(B\rho)|>\beta_{\rm stab}(B)\), then \(\rho\) is nonstabilizer and the bound is nontrivial.
\end{theorem}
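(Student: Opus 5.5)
The plan is to start from the $\ell_1$ form \eq{rom-l1-main} of robustness and use linearity of the trace together with the triangle inequality. Take any decomposition $\rho=\sum_j x_j\sigma_j$ into pure stabilizer states $\sigma_j$. Then
\begin{align}
    |\beta_B(\rho)|
    =
    \Bigl|\sum_j x_j\Tr(B\sigma_j)\Bigr|
    \le
    \sum_j |x_j|\,|\Tr(B\sigma_j)|
    \le
    \beta_{\rm stab}(B)\,\|x\|_1 .
\end{align}
The last step uses that each pure stabilizer state $\sigma_j$ lies in $\STAB_n$, so $|\Tr(B\sigma_j)|\le\beta_{\rm stab}(B)$ by definition \eq{witness-beta-defs}. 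Minimizing over all admissible $x$ gives $|\beta_B(\rho)|\le\beta_{\rm stab}(B)R(\rho)$. Dividing by $\beta_{\rm stab}(B)$ then yields \eq{witness-rom-lower-bound-main}.

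Alternatively, I can argue from the definition \eq{rom-def-main}. If $\rho=(p+1)\sigma_+-p\sigma_-$, then
\begin{align}
    |\Tr(B\rho)|\le\bigl[(p+1)+p\bigr]\beta_{\rm stab}(B)=(2p+1)\beta_{\rm stab}(B),
\end{align}
and minimizing over $p$ gives the same conclusion. I would note that the two definitions agree, but only one route is needed.

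For the ``in particular'' clause, suppose $|\Tr(B\rho)|>\beta_{\rm stab}(B)$. Then the bound forces $R(\rho)>1$. Every $\rho\in\STAB_n$ has $R(\rho)=1$, using the decomposition with $p=0$. Hence $\rho\notin\STAB_n$. This could also be seen directly, since $\rho\in\STAB_n$ would give $|\Tr(B\rho)|\le\beta_{\rm stab}(B)$ by definition.

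The only genuine obstacle is a degenerate case: $\beta_{\rm stab}(B)=0$ must be excluded, or interpreted as giving a vacuous or infinite bound. I would dispose of it by remarking that it happens only when $B$ is orthogonal to all stabilizer states. The stabilizer states span all Hermitian operators, so in that case $B=0$ and $\beta_B(\rho)=0$ as well. I would also note that the maximum in \eq{witness-beta-defs} is attained, because $\STAB_n$ is a compact polytope and the maximum is reached at a pure stabilizer vertex.
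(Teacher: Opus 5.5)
Your proof is correct and is the same argument the paper gives. The paper takes an $\ell_1$-optimal decomposition $\rho=\sum_j x_j\sigma_j$, while you bound an arbitrary decomposition and then minimize; in both cases the linear functional $\Tr(B\,\cdot)$, the triangle inequality, and $|\Tr(B\sigma_j)|\le\beta_{\rm stab}(B)$ finish the proof. Your remarks on the degenerate case $\beta_{\rm stab}(B)=0$ (which forces $B=0$) and on the nonstabilizer conclusion are valid additions that the paper leaves implicit.
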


The proof is given in Appendix~\ref{app:rom-proof-and-complexity}.  A Bell operator may be used as \(B\) after its measurement observables are fixed.  In that case, the denominator in Eq.~\eqref{eq:witness-rom-lower-bound-main} is the stabilizer threshold, not the local-hidden-variable threshold.  We record this observation because graph-state Bell operators are typically written as combinations of stabilizer generators~\cite{Baccari2020,bowles2018device,kuo2024self}, and because the original version of this project was motivated by such Bell-type observables.

\begin{corollary}[Bell-type observables as RoM witnesses]
\label{cor:bell-type-rom-witness}
Let \(\mathcal B\) be a Bell-type Hermitian operator after the local measurement observables are fixed.  Then
\begin{align}
    R(\rho)
    \ge
    \frac{|\Tr(\mathcal B\rho)|}{\max_{\sigma\in\STAB_n}|\Tr(\mathcal B\sigma)|}.
    \label{eq:bell-type-rom-bound-main}
\end{align}
This is a measurement-dependent magic witness.  It is not a device-independent Bell certificate unless the denominator is replaced by a local-hidden-variable threshold.
\end{corollary}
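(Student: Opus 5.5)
The corollary is a direct specialization of Theorem~\ref{thm:witness-rom-bound}, so the plan is to check that a Bell-type operator with fixed measurement observables meets that theorem's hypotheses, and then to justify the remark about device independence. Once the local observables are fixed, \(\mathcal B\) is a real linear combination of tensor products of Hermitian local observables, for instance of graph-state stabilizer generators \(K_j\) and their products. It is therefore a Hermitian operator on the \(n\)-qubit Hilbert space. Setting \(B=\mathcal B\) in Eq.~\eqref{eq:witness-beta-defs} gives \(\beta_{\rm stab}(\mathcal B)=\max_{\sigma\in\STAB_n}|\Tr(\mathcal B\sigma)|\), and Eq.~\eqref{eq:witness-rom-lower-bound-main} becomes Eq.~\eqref{eq:bell-type-rom-bound-main} verbatim.

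For completeness, I would recall the one-line mechanism behind the theorem, since it also explains the interpretive remark. Take an optimal decomposition \(\rho=\sum_j x_j\sigma_j\) from Eq.~\eqref{eq:rom-l1-main}. By linearity and the triangle inequality,
\begin{align}
|\Tr(\mathcal B\rho)|\le\sum_j|x_j|\,|\Tr(\mathcal B\sigma_j)|\le R(\rho)\max_{\sigma\in\STAB_n}|\Tr(\mathcal B\sigma)|,
\end{align}
where maximizing over pure stabilizer states suffices because \(\sigma\mapsto|\Tr(\mathcal B\sigma)|\) is convex on \(\STAB_n\). One degenerate case needs a note: if the denominator vanishes, then \(\Tr(\mathcal B\rho)=0\) for every \(\rho\), because stabilizer states span the full operator space. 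The bound is then read as trivial.

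The last sentence of the corollary is interpretive, and the only point needing care is to separate the two thresholds. The stabilizer threshold maximizes over quantum stabilizer states measured with the \emph{fixed} observables, so the certificate trusts those measurement operators and is therefore measurement dependent. A device-independent statement would instead require bounding \(\mathcal B\) over all local-hidden-variable models with unspecified measurements, which is a different quantity. I would state this distinction explicitly rather than prove an inequality relating the two thresholds. There is no real obstacle here; the main thing to get right is confirming Hermiticity after fixing the observables, and that is immediate.
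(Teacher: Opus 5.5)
Your proposal is correct and follows the paper's route: the corollary is Theorem~\ref{thm:witness-rom-bound} with $B=\mathcal B$, and the paper proves that theorem in Appendix~\ref{app:rom-proof-and-complexity} by the same optimal-decomposition and triangle-inequality argument you recall. Your checks on Hermiticity and on a vanishing denominator are harmless additions that the paper leaves implicit.
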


The preceding statements are deliberately general: they say how any calibrated
Hermitian observable becomes a magic witness once its stabilizer threshold is
known.  The rest of the paper is about making this denominator computable for a
class of observables that is not arbitrary.  We now use the CWS structure to
construct a coherence witness whose threshold depends only on the affine
geometry of the classical word set.

To illustrate the general witness mechanism before specializing to CWS codes, Fig.~\ref{fig:combined-bell-examples} shows small Bell-type examples in which a calibrated Hermitian observable gives a RoM lower bound after its stabilizer threshold is computed.  The model details are given in Appendix~\ref{subsec:bell-checks}; the CWS affine-profile theorem below is independent of these numerical scans.

\begin{figure*}[t]
\centering
\begin{subfigure}{0.32\textwidth}
    \centering
    \includegraphics[width=\linewidth]{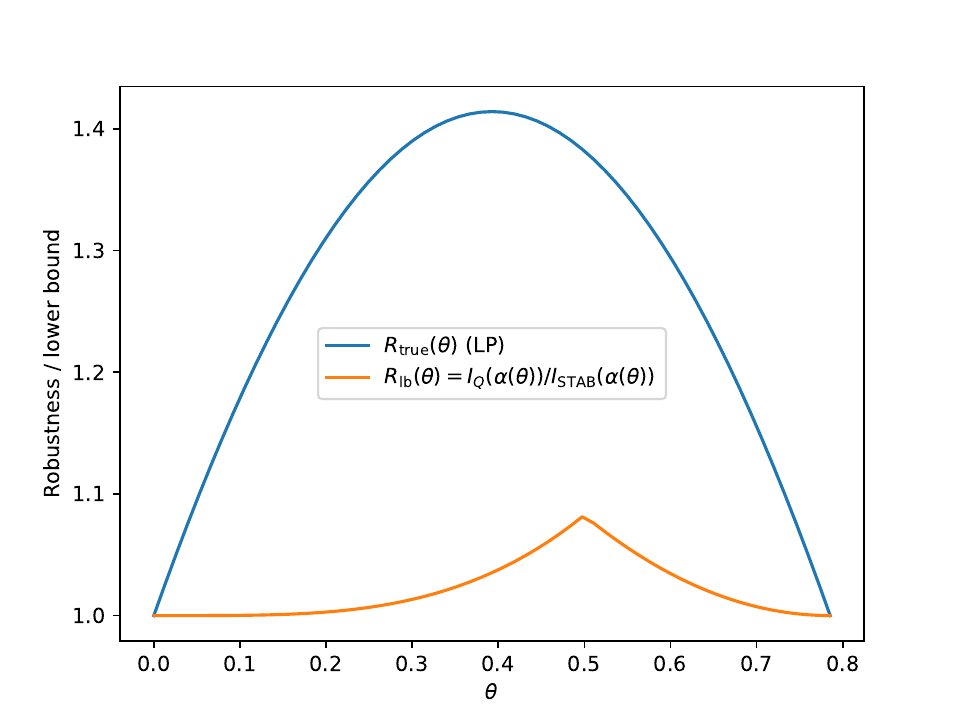}
    \caption{Two-qubit tilted-CHSH scan.}
    \label{fig:2-qubit}
\end{subfigure}\hfill
\begin{subfigure}{0.32\textwidth}
    \centering
    \includegraphics[width=\linewidth]{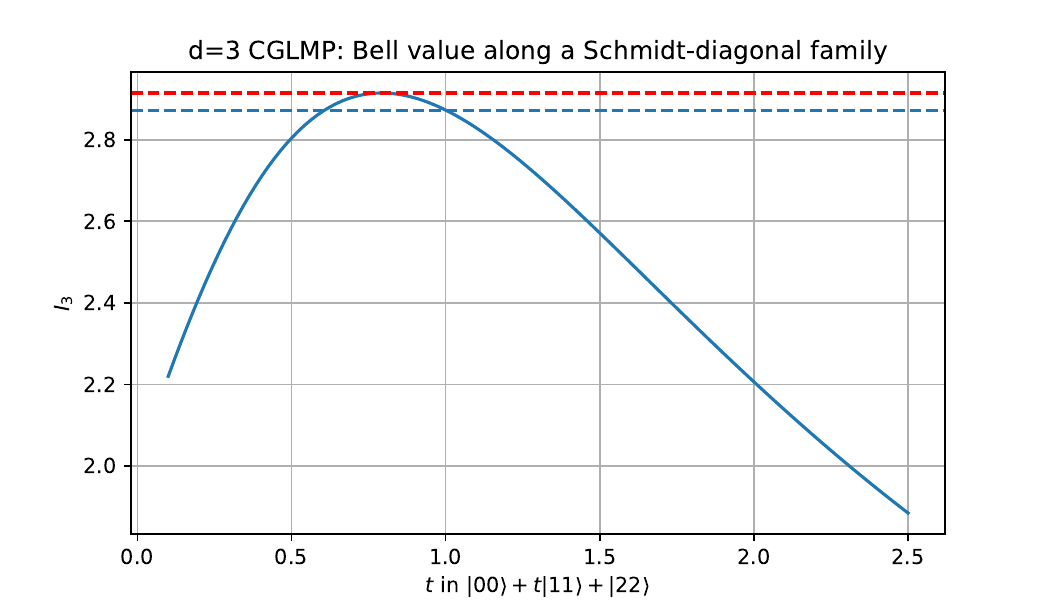}
    \caption{Generalized-W line scan.}
    \label{fig:3-qubit}
\end{subfigure}\hfill
\begin{subfigure}{0.32\textwidth}
    \centering
    \includegraphics[width=\linewidth]{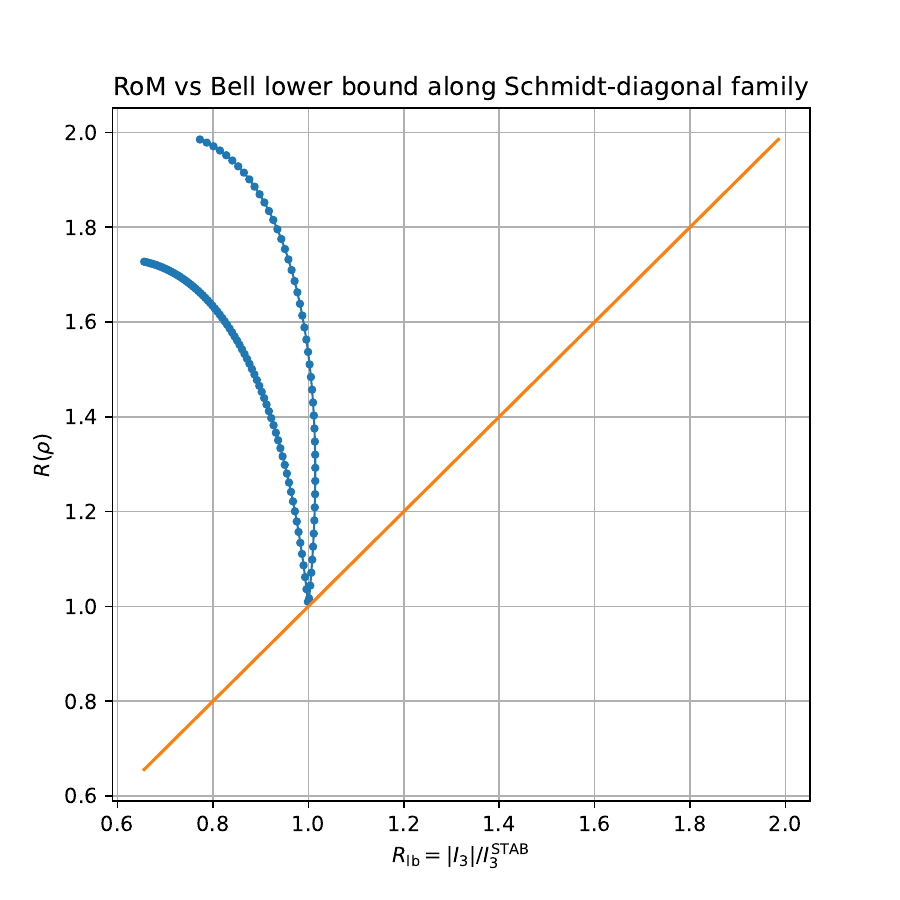}
    \caption{Three-qubit lower-bound scatter plot.}
    \label{fig:3-qubit-2}
\end{subfigure}
\caption{
Small-scale numerical checks of the witness-induced RoM lower-bound
principle.  Panel (a) shows the two-qubit family
\(\ket{\psi_\theta}=\cos\theta\ket{00}+\sin\theta\ket{11}\): the blue curve
is the exact RoM computed by the stabilizer-decomposition linear program,
while the orange curve is the tilted-CHSH stabilizer-threshold lower bound.
Panel (b) shows the symmetric line \(\phi=\pi/4\) in the generalized
three-qubit W family.  Again, the blue curve is the exact RoM and the orange
curve is the optimized Bell-type lower bound \(B^\star/6\); the vertical
dashed line marks the standard \(\ket W\) state.  Panel (c) plots the same
three-qubit scan as exact RoM versus Bell-certified lower bound.  The
diagonal line is \(y=x\), and the points lying on or above it verify that
the Bell-type quantity is a valid RoM lower bound.  These examples
illustrate Theorem~\ref{thm:witness-rom-bound}; they are independent of the
CWS affine-profile theorem developed later.  The tilted-CHSH
and three-qubit Bell functionals used in the plots are cited and specified
in Appendix~\ref{subsec:bell-checks}.
}
\label{fig:combined-bell-examples}
\end{figure*}

\section{General weighted CWS coherence witnesses}
\label{sec:cws-general}

The witness theorem leaves open the choice of the observable \(B\).  For a CWS code, there is a canonical choice when the target state is a coherent superposition of word states.  Since each word state is itself stabilizer, diagonal codeword population is not the resource we want to certify.  We therefore subtract the diagonal part and retain only off-diagonal codeword coherence.

Let \((G,\mathcal C)\) be a CWS code and let \(\ket{\Psi_\alpha}\) be the weighted superposition in Eq.~\eqref{eq:weighted-cws-state-prelim}.  The diagonal codeword population associated with the weights is
\begin{align}
    D_\alpha
    :=
    \sum_{c\in\mathcal C}|\alpha_c|^2\ket{c_G}\bra{c_G}.
    \label{eq:weighted-diagonal-population}
\end{align}
We define the CWS codeword-coherence witness by subtracting this diagonal contribution:
\begin{align}
    B_{\rm coh}^{(\alpha)}
    &:={}
    \ket{\Psi_\alpha}\bra{\Psi_\alpha}-D_\alpha
    \notag\\
    &=
    \sum_{\substack{c,c'\in\mathcal C\\c\ne c'}}
    \alpha_c\alpha_{c'}^*\ket{c_G}\bra{c'_G}.
    \label{eq:weighted-cws-coherence-witness}
\end{align}
It vanishes on every incoherent mixture of CWS word states:
\begin{align}
    \Tr\!\left[
    B_{\rm coh}^{(\alpha)}
    \sum_{c\in\mathcal C}p_c\ket{c_G}\bra{c_G}
    \right]
    =0.
    \label{eq:coherence-witness-vanishes-mixtures}
\end{align}
Thus \(B_{\rm coh}^{(\alpha)}\) probes codeword coherence rather than code-space population.

The numerator in the witness ratio is analytic.  Orthogonality of the graph-basis states gives
\begin{align}
    \bra{\Psi_\alpha}B_{\rm coh}^{(\alpha)}\ket{\Psi_\alpha}
    &=
    1-
    \sum_{c\in\mathcal C}|
    \alpha_c|^2|\braket{c_G|\Psi_\alpha}|^2
    \notag\\
    &=
    1-
    \sum_{c\in\mathcal C}|\alpha_c|^4.
    \label{eq:weighted-target-numerator-main}
\end{align}
The nontrivial part is the denominator, which requires an optimization over all \(n\)-qubit stabilizer states.  A naive enumeration involves \(2^{\Theta(n^2)}\) stabilizer states: 
\begin{align}
    \beta_{\rm coh}^{(\alpha)}
    :=
    \max_{\ket{\phi}\in\STAB_n}
    \left|
    \bra{\phi}B_{\rm coh}^{(\alpha)}\ket{\phi}
    \right|.
    \label{eq:weighted-coherence-denominator-main}
\end{align}
The special structure of CWS codes makes this optimization much smaller.  The following theorem is the general weighted version of the algorithmic result.

\begin{theorem}[Weighted CWS coherence algorithm]
\label{thm:weighted-cws-algorithm-main}
For a CWS word set \(\mathcal C\subseteq\mathbb F_2^n\) with \(K=|\mathcal C|\), the threshold \(\beta_{\rm coh}^{(\alpha)}\) can be computed exactly in time
\begin{align}
    T_{\rm wt}(n,K)
    =
    O\!\left(
    2^K(n^3+Kn^2)
    +
    K2^{\binom K2+3K}
    \right).
    \label{eq:weighted-runtime-main}
\end{align}
Consequently,
\begin{align}
    R(\Psi_\alpha)
    \ge
    \frac{1-\sum_{c\in\mathcal C}|\alpha_c|^4}{\beta_{\rm coh}^{(\alpha)}}.
    \label{eq:weighted-cws-rom-lower-bound-main}
\end{align}
\end{theorem}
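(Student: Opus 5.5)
The plan is to pass to the graph basis and reduce the optimization over stabilizer states to a finite search indexed by subsets of $\mathcal C$ together with phase patterns on them. The lower bound \eqref{eq:weighted-cws-rom-lower-bound-main} then follows directly from Theorem~\ref{thm:witness-rom-bound} and the numerator \eqref{eq:weighted-target-numerator-main}, so all the work is in computing $\beta_{\rm coh}^{(\alpha)}$.

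\textbf{Reduction to the graph basis.} Since $U_G$ is Clifford, $\ket{\phi}$ ranges over stabilizer states exactly when $\ket{\eta}=U_G^\dagger\ket{\phi}$ does. In this basis,
\[
\bra{\phi}B^{(\alpha)}_{\rm coh}\ket{\phi}=\Big|\sum_{c\in\mathcal C}\alpha_c^*\eta_c\Big|^2-\sum_{c\in\mathcal C}|\alpha_c|^2|\eta_c|^2 ,
\]
where $\eta_x=\braket{x|\eta}$. By the normal form \eqref{eq:stab-affine-quadratic-prelim}, $\eta_x=2^{-r/2}i^{\ell(x)}(-1)^{q(x)}\mathbbm 1[x\in A]$ for an affine flat $A$ of dimension $r$. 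The expectation therefore depends only on three pieces of data: the intersection $S=A\cap\mathcal C$, the dimension $r$, and the phase vector $(i^{\ell(c)}(-1)^{q(c)})_{c\in S}$. It equals $2^{-r}\big(|\sum_{c\in S}\alpha_c^*\omega_c|^2-\sum_{c\in S}|\alpha_c|^2\big)$.

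\textbf{Which flats and phases to consider.} For fixed $S$ and phases, $|\cdot|$ is maximized by the smallest admissible $r$. The smallest flat containing $S$ is $\aff(S)$, with $r_S=\dim\aff(S)\le|S|-1$. Any larger flat $A\supseteq S$ with $A\cap\mathcal C=S$ only decreases $2^{-r}$. A flat with $A\cap\mathcal C\supsetneq S$ is accounted for under the larger set $A\cap\mathcal C$. Hence
\[
\beta=\max_{S\subseteq\mathcal C}\ \max_{\omega\in\Omega_S}2^{-r_S}\Big||\textstyle\sum_S\alpha_c^*\omega_c|^2-\sum_S|\alpha_c|^2\Big| .
\]
Two conditions on $S$ need care. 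First, restricting to $\aff(S)\cap\mathcal C=S$ is unnecessary: every term is attained by some stabilizer state supported on $\aff(S)$ whose amplitudes vanish off $S$. Second, the set $\Omega_S$ of achievable phase vectors must be checked, and this is the step I expect to be the main obstacle.

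\textbf{The obstacle: characterizing $\Omega_S$.} Restricted to $\aff(S)$, parametrized by affine coordinates $y\in\mathbb F_2^{r_S}$, the achievable phases are exactly $i^{\ell(y)}(-1)^{q(y)}$ with $\ell$ affine and $q$ quadratic. The relevant values $\ell(c)\bmod 2$ are determined by at most $r_S+1\le|S|$ parameters, so there are at most $2^{K}$ choices. A quadratic form in $r_S$ variables has at most $\binom{r_S}{2}+r_S+1$ coefficients, so there are at most $2^{\binom K2+K}$ sign patterns. The technical point is that the stabilizer normal form $i^{\ell}(-1)^q$ is surjective onto all such pairs on a flat. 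I would verify this from the Dehaene--De Moor normal form reviewed in Appendix~\ref{app:graph-basis-proof}: applying CZ, CS, and S gates inside the flat realizes any quadratic and linear phase. Enumerating the coefficients over-counts the distinct patterns on $S$ but stays within the stated bound.

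\textbf{Runtime.} For each of the $2^K$ subsets, computing $\aff(S)$, its dimension and affine coordinates by Gaussian elimination costs $O(n^3+Kn^2)$. For each subset, enumerating the phase patterns and evaluating the $O(K)$-term expression costs $K2^{\binom K2+2K}$. Summed over subsets, this gives $K2^{\binom K2+3K}$, matching \eqref{eq:weighted-runtime-main}.
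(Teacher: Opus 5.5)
There is a genuine gap where you drop the exactness condition. You claim that ``restricting to $\aff(S)\cap\mathcal C=S$ is unnecessary'' because each term ``is attained by some stabilizer state supported on $\aff(S)$ whose amplitudes vanish off $S$.'' This is false. By the normal form \eqref{eq:stab-affine-quadratic-prelim}, a stabilizer state supported on an affine flat $A$ has amplitude of modulus exactly $2^{-\dim A/2}$ at \emph{every} point of $A$. So any stabilizer state whose support contains $S$ also overlaps every word in $\aff(S)\cap\mathcal C$, and its witness value is the one indexed by $D=A\cap\mathcal C$, not by $S$.

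Your argument does prove $\beta_{\rm coh}^{(\alpha)}\le\max_S(\cdots)$, since every stabilizer state is accounted for by its exact intersection. The reverse inequality fails for non-exact $S$, so your procedure returns only an upper bound on the threshold. That still yields a valid but weaker RoM bound, not the exact computation the theorem asserts. The paper avoids this with Step~3 of its algorithm, which discards $D$ unless $A_D\cap\mathcal C=D$.

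The failure is real, not just a missing justification. Take $n=3$, $\mathcal C=\mathbb F_2^3$ and $\alpha_y=2^{-3/2}(-1)^{y_1y_2y_3}$, i.e.\ $\Psi_\alpha=U_G\,\mathrm{CCZ}\ket{+++}$. Every flat is an exact intersection. Flats of dimension $1$, $2$, $3$ give $|F|\le 1/8$, $3/8$ and $\frac18\bigl(\frac{36}{8}-1\bigr)=\frac{7}{16}$ respectively. The last value uses $\max_\chi\bigl|\sum_y(-1)^{y_1y_2y_3}\chi(y)\bigr|=6$ over stabilizer phases, which is the stabilizer fidelity $9/16$ of the CCZ state. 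Hence $\beta_{\rm coh}^{(\alpha)}=7/16$.

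Your maximum, however, includes $S=\mathbb F_2^3\setminus\{111\}$ with $r_S=3$ and constant phase. That term is $\frac18\bigl(\frac{49}{8}-\frac78\bigr)=\frac{21}{32}>\frac{7}{16}$, and no stabilizer state attains it.

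Restoring the check $\aff(S)\cap\mathcal C=S$ repairs the argument, and it costs nothing beyond the Gaussian elimination you already budget. With it, your phase count and runtime match the paper's. A smaller slip: controlled-$S$ is not Clifford. The phases $i^{\ell}(-1)^{q}$ on a flat need only $Z$, $S$ and $CZ$ acting on $\ket{+}^{\otimes r}$, followed by an affine CNOT/$X$ embedding into the flat.
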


The detailed proof is given in Appendix~\ref{app:graph-basis-proof}.  The key point is that in the graph basis every stabilizer state has affine support and affine-quadratic phase.  Hence the threshold in Eq.~\eqref{eq:weighted-coherence-denominator-main} is computed by enumerating intersections \(D=\mathcal C\cap A\), where \(A\) is an affine flat, and by optimizing the allowed phase on \(D\).  Since \(\dim\aff(D)\le |D|-1\le K-1\), the exponential part depends on \(K\), not on the total number of \(n\)-qubit stabilizer states.

The witness is also Pauli-measurable.  If \(S_G\) is the graph-state stabilizer group, then
\begin{align}
    \ket{G}\bra{G}
    =
    2^{-n}\sum_{s\in S_G}s.
    \label{eq:graph-projector-pauli-main}
\end{align}
Consequently,
\begin{align}
    \ket{c_G}\bra{c'_G}
    =
    2^{-n}\sum_{s\in S_G}Z^c s Z^{c'}.
    \label{eq:transition-pauli-expansion-main}
\end{align}
Equations~\eqref{eq:weighted-cws-coherence-witness} and~\eqref{eq:transition-pauli-expansion-main} give an analytic Pauli expansion of \(B_{\rm coh}^{(\alpha)}\), determined entirely by \((G,\mathcal C,\alpha)\).

The weighted statement is useful for general CWS superpositions, but the equal-weight case in Eq.~\eqref{eq:equal-cws-state-prelim} is especially natural for code families.  In that case the phase optimization appearing in the weighted algorithm has a closed-form solution, and the stabilizer threshold becomes a function only of the affine profile of the word set \(\mathcal C\).


\section{Equal-weight affine-profile formula}
\label{sec:equal-weight}

The weighted theorem is useful as an algorithmic statement, but the main geometric simplification occurs for equal-weight CWS superpositions.  In this case the phase optimization disappears: for any fixed affine support, the constant phase is allowed and maximizes the coherent sum.  The entire stabilizer threshold is therefore determined only by the affine-intersection profile of the classical word set.

We now specialize to the equal-weight state in Eq.~\eqref{eq:equal-cws-state-prelim}.  The witness becomes
\begin{align}
    B_{\rm coh}
    =
    \ket{\Psi_{\mathcal C}}\bra{\Psi_{\mathcal C}}
    -
    \frac1K\sum_{c\in\mathcal C}\ket{c_G}\bra{c_G}.
    \label{eq:equal-coherence-witness-main}
\end{align}
For equal weights, the phase optimization in Theorem~\ref{thm:weighted-cws-algorithm-main} collapses.  The reason is simple: for a fixed affine support \(A\), the best phase alignment is achieved by the constant phase, which is a valid affine-quadratic phase. We present it using the following Theorem.

\begin{theorem}[Equal-weight affine-profile threshold]
\label{thm:equal-weight-affine-profile-main}
For the equal-weight CWS coherence witness,
\begin{align}
    \beta_{\rm coh}
    &:=
    \max_{\ket{\phi}\in\STAB_n}
    \left|
    \bra{\phi}B_{\rm coh}\ket{\phi}
    \right|
    \notag\\
    &=
    \max_{0\le r\le n}
    \frac{M_r(\mathcal C)\bigl(M_r(\mathcal C)-1\bigr)}{K2^r}.
    \label{eq:equal-affine-profile-formula-main}
\end{align}
Consequently,
\begin{align}
    R(\Psi_{\mathcal C})
    \ge
    \frac{1-\frac1K}{\beta_{\rm coh}}.
    \label{eq:equal-weight-rom-main}
\end{align}
Moreover, \(\beta_{\rm coh}\) can be computed exactly by subset enumeration in time
\begin{align}
    T_{\rm eq}(n,K)
    =
    O\!\left(2^K(n^3+Kn^2)\right).
    \label{eq:equal-subset-runtime-main}
\end{align}
\end{theorem}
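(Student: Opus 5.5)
The approach is to move everything to the computational basis with the Clifford $U_G$. Then use the affine-quadratic normal form \eqref{eq:stab-affine-quadratic-prelim} to write the witness expectation as a sum of unit phases over the intersection of an affine flat with $\mathcal C$.

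\textbf{Reduction to a phase sum.} For a stabilizer state $\ket{\phi}$, set $\ket{\eta}=U_G^\dagger\ket{\phi}$. This is again a stabilizer state, because $U_G$ is Clifford. By \eqref{eq:graph-basis-unitary-action}, $\braket{c_G|\phi}=\braket{c|\eta}=:\eta(c)$. Substituting into \eqref{eq:equal-coherence-witness-main} gives
\begin{align}
  \bra{\phi}B_{\rm coh}\ket{\phi}
  =\frac1K\Bigl(\Bigl|\sum_{c\in\mathcal C}\eta(c)\Bigr|^2-\sum_{c\in\mathcal C}|\eta(c)|^2\Bigr).
  \label{eq:plan-phase-sum}
\end{align}
By \eqref{eq:stab-affine-quadratic-prelim}, $\eta$ is supported on an affine flat $A=V+t$ of some dimension $r$, and $\eta(x)=2^{-r/2}\omega(x)$ with $|\omega(x)|=1$ on $A$. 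Put $D=A\cap\mathcal C$ and $m=|D|$. Then \eqref{eq:plan-phase-sum} becomes
\begin{align}
  \bra{\phi}B_{\rm coh}\ket{\phi}=\frac{|\sum_{c\in D}\omega(c)|^2-m}{K2^r}.
\end{align}

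\textbf{Upper bound.} The numerator lies in $[-m,\,m^2-m]$.
\begin{itemize}
\item If $m\ge2$, then $m\le m(m-1)$, so the absolute value is at most $m(m-1)/(K2^r)$.
\item If $m\le 1$, the expectation is exactly $0$.
\end{itemize}
Since $m\le M_r(\mathcal C)$ by definition \eqref{eq:affine-profile-def}, we get $|\bra{\phi}B_{\rm coh}\ket{\phi}|\le M_r(M_r-1)/(K2^r)$. Maximizing over $r$ gives the upper bound in \eqref{eq:equal-affine-profile-formula-main}.

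\textbf{Achievability.} For each $r$, choose an $r$-flat $A$ attaining $M_r(\mathcal C)$. Take $\ket{\eta}=2^{-r/2}\sum_{x\in A}\ket{x}$, which is the normal form with $\ell=q=0$ and hence a stabilizer state. Then $\ket{\phi}=U_G\ket{\eta}$ attains $M_r(M_r-1)/(K2^r)$ exactly. This proves equality in \eqref{eq:equal-affine-profile-formula-main}.

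\textbf{RoM bound.} The bound \eqref{eq:equal-weight-rom-main} then follows from Theorem~\ref{thm:witness-rom-bound}. With $\alpha_c=K^{-1/2}$, the numerator \eqref{eq:weighted-target-numerator-main} equals $1-\sum_c K^{-2}=1-1/K$.

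\textbf{Runtime.} First rewrite the maximum over flats as a maximum over subsets,
\begin{align}
  \beta_{\rm coh}=\max_{D\subseteq\mathcal C,\ |D|\ge2}\frac{|D|(|D|-1)}{K\,2^{\dim\aff(D)}}.
\end{align}
\begin{itemize}
\item \emph{Each term is at most $\beta_{\rm coh}$.} For any $D$, the flat $\aff(D)$ has dimension $d=\dim\aff(D)$ and contains at least $|D|$ words. So $|D|(|D|-1)/(K2^{d})\le M_d(M_d-1)/(K2^d)$.
\item \emph{The maximum is attained.} Let $A$ be an optimal $r$-flat and $D=A\cap\mathcal C$. Then $\aff(D)\subseteq A$ has dimension $d\le r$, so $|D|(|D|-1)/(K2^{d})\ge |D|(|D|-1)/(K2^{r})$.
\end{itemize}
The algorithm enumerates the $2^K$ subsets $D$. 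For each one it computes $\dim\aff(D)$ by Gaussian elimination on the at most $K$ difference vectors $d_i+d_1$ from \eqref{eq:affine-hull-def}, at cost $O(n^3+Kn^2)$ under a crude bound. This gives $T_{\rm eq}$ in \eqref{eq:equal-subset-runtime-main}.

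\textbf{Main obstacle.} The delicate step is the absolute value in the definition of $\beta_{\rm coh}$. The witness has negative expectation values on some stabilizer states, down to $-m/(K2^r)$, and these must be shown never to exceed the positive branch. That is the case analysis on $m$ above: $m\ge2$ versus $m\le1$.

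A second point is that the reduction needs the normal form to give amplitudes of equal modulus $2^{-r/2}$ on the whole support. This is exactly what \eqref{eq:stab-affine-quadratic-prelim} provides, so the proof depends on citing it precisely.
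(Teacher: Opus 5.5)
Your proof is correct and follows the paper's argument in Appendix~B. The steps match: rotate by the Clifford $U_G$, use the affine-quadratic normal form to reduce to $\bigl(|\sum_{c\in D}\chi(c)|^2-m\bigr)/(K2^r)$, bound the phase sum by $m$ with equality for the constant phase, and then enumerate subsets. Your explicit handling of the negative branch ($m\le m(m-1)$ for $m\ge2$, zero for $m\le1$) and your justification of the subset maximum without the exactness check $A_D\cap\mathcal C=D$ fill in steps the paper leaves implicit.
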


The proof is given in Appendix~\ref{app:graph-basis-proof}.  This theorem is the geometric core of the paper: the quantum stabilizer threshold is exactly the affine-intersection profile of the classical word set.  For small \(K\), subset enumeration is efficient.  For small \(n\) but larger \(K\), it may be better to enumerate affine flats directly.  Since the number of affine \(r\)-flats in \(\mathbb F_2^n\) is \(2^{n-r}{n\brack r}_2\), where \({n\brack r}_2\) is the binary Gaussian binomial coefficient, this gives the alternative exact runtime
\begin{align}
    T_{\rm flat}(n,K)
    =
    O\!\left(
    K\sum_{r=0}^{n}2^{n-r}{n\brack r}_2
    \right)
    =
    2^{n^2/4+O(n)}K.
    \label{eq:affine-flat-runtime-main}
\end{align}
In the examples below, one may use whichever enumeration is smaller.

A useful structural consequence is that linear word sets do not generate magic in the equal-weight mode.  If \(\mathcal C\le\mathbb F_2^n\) is a linear subspace, then character orthogonality implies that \(\ket{\Psi_{\mathcal C}}\) is a uniform quadratic-phase state on a linear subspace, hence a stabilizer state.  Nontrivial equal-weight magic therefore arises from nonlinear affine geometry of \(\mathcal C\).


\section{QEC families and examples}
\label{sec:qec-families}

We now demonstrate that the affine-profile formula is not merely a formal reduction.  It gives an analytic bound for an infinite QEC family and exact rational certificates for several standard nonadditive CWS codes.  In all cases, the certified state is the equal coherent superposition over the listed CWS word states; we do not claim that every state in the corresponding code space has magic.

\subsection{The binary \texorpdfstring{\(((n,n+1,2))\)}{((n,n+1,2))} union-stabilizer family}
\label{subsec:ust-family}

We begin with the binary member of the \(((n,1+n(q-1),2))_q\) union-stabilizer family~\cite{eczoo_arvind}.  In the qubit case \(q=2\), the parameters are \(((n,n+1,2))\).  The case \(n=5\) recovers the cyclic \(((5,6,2))\) code, which detects single-qubit errors~\cite{eczoo_qubit_5_6_2}.  Up to an affine change of coordinates, the CWS word set can be chosen as
\begin{align}
    \mathcal C_n
    =
    \{0,e_1,\ldots,e_n\}
    \subseteq\mathbb F_2^n,
    \qquad
    K=n+1.
    \label{eq:affine-simplex-word-set-main}
\end{align}

\begin{theorem}[Affine-simplex union-stabilizer family]
\label{thm:affine-simplex-family-main}
For \(n\ge2\), the equal-weight CWS superposition over \(\mathcal C_n\) satisfies
\begin{align}
    \beta_{\rm coh}
    =
    \frac{3}{2(n+1)},
    \label{eq:affine-simplex-beta-main}
\end{align}
and therefore
\begin{align}
    R(\Psi_{\mathcal C_n})
    \ge
    \frac{2n}{3}.
    \label{eq:affine-simplex-rom-main}
\end{align}
\end{theorem}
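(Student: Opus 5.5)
We apply Theorem~\ref{thm:equal-weight-affine-profile-main} with \(K=n+1\). The task then reduces to computing the affine profile \(M_r(\mathcal C_n)\) for every \(r\), and maximizing the resulting ratio \(M_r(M_r-1)/\bigl((n+1)2^r\bigr)\) over \(r\).

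\textbf{Step 1: the profile.} Any affine flat \(A\) of dimension \(r\) meets \(\mathcal C_n\) in a set \(D=A\cap\mathcal C_n\). The set \(\mathcal C_n=\{0,e_1,\dots,e_n\}\) is affinely independent: the differences \(e_i-0\) are linearly independent. Hence every subset \(D\) of size \(m\) has \(\dim\aff(D)=m-1\), so equality holds in \eqref{eq:affine-hull-dimension-bound}. Consequently a flat of dimension \(r\) contains at most \(r+1\) words. Conversely, the coordinate flat \(\operatorname{span}\{e_1,\dots,e_r\}\) contains exactly \(0,e_1,\dots,e_r\). This gives
\[
M_r(\mathcal C_n)=\min(r+1,\,n+1).
\]
One should double-check that no flat picks up extra words by accident. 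This cannot happen: a flat through \(r+2\) affinely independent points has dimension at least \(r+1\).

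\textbf{Step 2: the maximization.} For \(0\le r\le n\) we have \(M_r=r+1\), so the quantity to maximize is
\[
f(r)=\frac{(r+1)r}{(n+1)2^r}.
\]
Its values are \(f(0)=0\), \(f(1)=1/(n+1)\), \(f(2)=6/\bigl(4(n+1)\bigr)=3/\bigl(2(n+1)\bigr)\), and \(f(3)=12/\bigl(8(n+1)\bigr)=3/\bigl(2(n+1)\bigr)\). For \(r\ge3\) the ratio satisfies
\[
\frac{f(r+1)}{f(r)}=\frac{r+2}{2r}\le\frac{5}{6}<1,
\]
so \(f\) is nonincreasing from \(r=3\) onward. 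The maximum is therefore \(3/\bigl(2(n+1)\bigr)\), attained at \(r=2\) (and also at \(r=3\) when \(n\ge3\)). This requires \(n\ge2\), so that \(r=2\) is admissible. For \(n=2\) only \(r\le2\) is available, and the value is still \(f(2)\). This gives \eqref{eq:affine-simplex-beta-main}.

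\textbf{Step 3: the RoM bound.} Eq.~\eqref{eq:equal-weight-rom-main} gives
\[
R\ge\frac{1-\tfrac{1}{n+1}}{3/\bigl(2(n+1)\bigr)}=\frac{n}{n+1}\cdot\frac{2(n+1)}{3}=\frac{2n}{3}.
\]

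\textbf{Main obstacle.} The only nontrivial step is the exactness of Step 1, namely the claim that \(|A\cap\mathcal C_n|\le\dim A+1\). This rests on the affine independence of \(\mathcal C_n\). One must also confirm that the affine change of coordinates used to put the union-stabilizer word set in the form \eqref{eq:affine-simplex-word-set-main} preserves the profile. It does, since invertible affine maps send \(r\)-flats bijectively to \(r\)-flats, so \(M_r\) is an affine invariant.
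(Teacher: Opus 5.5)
Your proposal is correct and follows the paper's proof: affine independence of $\mathcal C_n$ gives $M_r(\mathcal C_n)=r+1$, and Theorem~\ref{thm:equal-weight-affine-profile-main} then reduces the claim to maximizing $r(r+1)/2^r$, which peaks at $r=2,3$ with value $3/2$. The ratio test $f(r+1)/f(r)=(r+2)/(2r)$, the separate treatment of $n=2$, and the remark that $M_r$ is invariant under affine maps all make explicit details that the paper asserts without proof.
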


\begin{proof}
The \(n+1\) points of \(\mathcal C_n\) are affinely independent.  Hence an \(r\)-dimensional affine flat contains at most \(r+1\) of them, and this is tight by taking the flat spanned by \(0,e_1,\ldots,e_r\).  Thus
\begin{align}
    M_r(\mathcal C_n)=r+1.
    \label{eq:affine-simplex-profile-main}
\end{align}
Using Theorem~\ref{thm:equal-weight-affine-profile-main},
\begin{align}
    \beta_{\rm coh}
    =
    \max_r\frac{r(r+1)}{(n+1)2^r}.
    \label{eq:affine-simplex-beta-optimization-main}
\end{align}
The function \(r(r+1)/2^r\) is maximized at \(r=2\) and \(r=3\), where it equals \(3/2\).  This gives Eq.~\eqref{eq:affine-simplex-beta-main}.  The numerator is \(1-1/(n+1)=n/(n+1)\), which gives Eq.~\eqref{eq:affine-simplex-rom-main}.  For \(((5,6,2))\), this gives the lower bound \(10/3\), matching the first row of Table~\ref{tab:cws-high-distance-examples}.
\end{proof}

\subsection{High-distance and structured nonadditive CWS examples}
\label{subsec:high-distance-cws-examples}

We next apply the affine-profile algorithm to standard nonadditive CWS examples.  These include the nine-qubit cyclic CWS code \(((9,12,3))\), a nonadditive code correcting arbitrary single-qubit errors~\cite{eczoo_qubit_9_12_3}, and the \(((10,18,3))\) and \(((10,20,3))\) codes found in the CWS framework~\cite{Cross2009}.  The last row gives the first nontrivial SSW distance-two example.  It is included as a structured case where direct affine-flat enumeration is more convenient than subset enumeration.  Since the relevant values of \(K\) are between \(6\) and \(22\), the exact affine-profile computation is feasible.

For each equal-weight CWS state, we compute Eq.~\eqref{eq:equal-affine-profile-formula-main}.  The maximizing affine flat is summarized by the pair
\begin{align}
    (r_*,m_*),
    \qquad
    m_*:=M_{r_*}(\mathcal C),
    \label{eq:best-affine-pair-main}
\end{align}
so that
\begin{align}
    \beta_{\rm coh}
    =
    \frac{m_*(m_*-1)}{K2^{r_*}}.
    \label{eq:best-affine-pair-beta-main}
\end{align}

\begin{table*}[t]
\centering
\begin{tabular}{c|c|c|c|c|c}
Code & \(K\) & \(d\) & \((r_*,m_*)\) & \(\beta_{\rm coh}\) &
\(R(\Psi_{\mathcal C})\ge (1-1/K)/\beta_{\rm coh}\) \\
\hline
\(((5,6,2))\) & \(6\) & \(2\) & \((2,3)\) & \(1/4\) & \(10/3\) \\
\(((9,12,3))\) & \(12\) & \(3\) & \((3,6)\) & \(5/16\) & \(44/15\) \\
\(((10,18,3))\) & \(18\) & \(3\) & \((3,8)\) & \(7/18\) & \(17/7\) \\
\(((10,20,3))\) & \(20\) & \(3\) & \((4,12)\) & \(33/80\) & \(76/33\) \\
\(((7,22,2))_{\rm SSW}\) & \(22\) & \(2\) & \((5,16)\) & \(15/44\) & \(14/5\)
\end{tabular}
\caption{Exact affine-profile magic lower bounds for standard nonadditive CWS examples.  The pair \((r_*,m_*)\) records an affine dimension \(r_*\) and an intersection size \(m_*=M_{r_*}(\mathcal C)\) attaining the stabilizer threshold.  All entries are exact rational values obtained from the affine-profile computation.}
\label{tab:cws-high-distance-examples}
\end{table*}

For instance, the \(((9,12,3))\) row is obtained from an affine 3-flat containing six CWS words:
\begin{align}
    \beta_{\rm coh}
    =
    \frac{6\cdot5}{12\cdot2^3}
    =
    \frac{5}{16},
    \qquad
    R\ge
    \frac{11/12}{5/16}
    =
    \frac{44}{15}.
    \label{eq:912-example-calculation-main}
\end{align}
The rational form is a consequence of the equal-weight affine-profile formula.  The CWS word sets and affine-profile data used to compute the table are recorded in Appendix~\ref{subsec:cws-data-computations}.
\subsection{The SSW distance-two family}
\label{subsec:ssw-family}

The SSW distance-two nonadditive family also admits a CWS description in Ref.~\cite{Cross2009}.  For odd \(n\), the original codewords have the form \(\ket{x}+\ket{\bar x}\), with \(x\) chosen according to a weight-parity rule.  If \(n\equiv1\pmod4\), one takes odd-weight strings satisfying \(|x|<(n-1)/2\); if \(n\equiv3\pmod4\), one takes even-weight strings with the same bound.

In the CWS formulation, the stabilizer is equivalent to a star-graph stabilizer and the word operators may be written as
\begin{align}
    W_x=X_1^{x_1}Z_2^{x_2}\cdots Z_n^{x_n}.
    \label{eq:ssw-word-operators-main}
\end{align}
Passing to standard CWS form gives the word label
\begin{align}
    c(x)
    =
    (0,x_2+x_1,x_3+x_1,\ldots,x_n+x_1)
    \in\mathbb F_2^n.
    \label{eq:ssw-standard-label-main}
\end{align}
The pair \(x,\bar x\) gives the same label.  For \(n=7\), this produces \(K=22\) CWS words.  Direct affine-flat enumeration gives
\begin{align}
    (M_0,M_1,M_2,M_3,M_4,M_5,M_6,M_7)
    =
    (1,2,4,7,11,16,22,22),
    \label{eq:ssw-seven-profile-main}
\end{align}
which yields the last row of Table~\ref{tab:cws-high-distance-examples}.

\subsection{The Rains family and analytic affine profiles}
\label{subsec:rains-family}

The Rains family has parameters
\begin{align}
    \left(\left(2m+1,\,3\times2^{2m-3},\,2\right)\right),
    \qquad
    m\ge2,
    \label{eq:rains-family-parameters-main}
\end{align}
and includes the \(((5,6,2))\) code as the case \(m=2\)~\cite{Rains1997,eczoo_rains,rains1997nonadditive,rains1999quantum}.  Here
\begin{align}
    n=2m+1,
    \qquad
    K=3\times2^{2m-3}=3\times2^{n-4}.
    \label{eq:rains-family-k-main}
\end{align}
Thus \(K\) is exponential in \(n\), and exact subset enumeration is not scalable.  The equal-weight formula nevertheless gives a useful reduction:
\begin{align}
    \beta_{\rm coh}
    =
    \max_r
    \frac{M_r(\mathcal C_{\rm Rains})(M_r(\mathcal C_{\rm Rains})-1)}{K2^r}.
    \label{eq:rains-affine-profile-main}
\end{align}
Therefore, any analytic upper bound
\begin{align}
    M_r(\mathcal C_{\rm Rains})\le U_r
    \label{eq:rains-profile-upper-bound-main}
\end{align}
immediately gives
\begin{align}
    R(\Psi_{\mathcal C_{\rm Rains}})
    \ge
    \frac{1-\frac1K}{\displaystyle\max_r\frac{U_r(U_r-1)}{K2^r}}.
    \label{eq:rains-rom-from-profile-bound-main}
\end{align}
Thus, for large structured CWS families, the remaining challenge is no longer stabilizer-state enumeration but bounding the affine-intersection profile of the associated classical word set.  The Rains family therefore marks the boundary of the present exact enumeration method.  Our theorem does not by itself give a scalable algorithm for this family; instead, it identifies the precise finite-geometric quantity whose upper bound would imply scalable RoM lower bounds.
\begin{problem}[Affine profile of the Rains word set]
\label{prob:rains-affine-profile}
Find useful upper bounds on \(M_r(\mathcal C_{\rm Rains})\).  Any such bound immediately gives scalable RoM lower bounds for equal-weight Rains CWS superpositions through Eq.~\eqref{eq:rains-rom-from-profile-bound-main}.
\end{problem}


\section{Conclusion}
\label{sec:conclusion}

We developed CWS codeword-coherence witnesses for lower-bounding robustness of magic.  The construction is based on a simple resource-theoretic observation: CWS word states are stabilizer states individually, so the nontrivial resource is not code-space membership but coherent interference among different word states.  The witness \(B_{\rm coh}^{(\alpha)}\) removes diagonal codeword population and isolates this interference.

The main technical result is the affine-profile reduction.  For general weighted CWS superpositions, the stabilizer threshold is computed by affine intersections and affine-quadratic phases.  For equal-weight superpositions, the phase optimization collapses and the threshold is governed exactly by the affine-intersection profile \(M_r(\mathcal C)\).  This converts a quantum stabilizer-threshold problem into a classical finite-geometric incidence problem over the CWS word set.

The formula gives an analytic family of lower bounds for the binary union-stabilizer distance-two family and exact rational certificates for several standard nonadditive CWS codes.  It also identifies the obstacle for large families such as the Rains family: one needs analytic upper bounds on affine intersections, rather than brute-force stabilizer enumeration.  Future directions include proving such bounds for structured CWS families and searching for coherence-constrained Bell inequalities supported on the Pauli expansions of CWS coherence witnesses.

\section*{Acknowledgements}

L.-Y.H. acknowledges financial support from the National Science and Technology Council (NSTC) of Taiwan under Grant No.~NSTC~115-2112-M-033-009.  E.-J.K. acknowledges financial support from the National Science and Technology Council (NSTC) of Taiwan under Grant No.~NSTC~114-2112-M-A49-036-MY3.
\section*{Code availability}

The numerical code used in this work has two components.  One component computes the affine-intersection profiles and stabilizer-threshold values in Table~\ref{tab:cws-high-distance-examples} using linear algebra over \(\mathbb F_2\).  The other component reproduces the Bell-type checks in Fig.~\ref{fig:combined-bell-examples}, including the tilted-CHSH scan, the generalized-W scan, and the exact RoM linear programs for \(n\le3\).  The code needed to reproduce the numerical entries and plots will be made publicly available in a GitHub repository upon publication, and is available from the corresponding author upon reasonable request.


\appendix

\section{Proof of the RoM lower bound and complexity accounting}
\label{app:rom-proof-and-complexity}

\subsection{Proof of the witness-induced RoM lower bound}
\label{subsec:appendix-rom-proof}

In this appendix, we prove Theorem~\ref{thm:witness-rom-bound}.

\begin{proof}[Proof of Theorem~\ref{thm:witness-rom-bound}]
By Eq.~\eqref{eq:rom-l1-main}, there exists a real vector \(x\) such that
\begin{align}
    \rho=\sum_j x_j\sigma_j,
    \qquad
    \sum_j|x_j|=R(\rho),
    \label{eq:rom-proof-decomposition}
\end{align}
where the \(\sigma_j\) are pure stabilizer states.  Applying the linear functional defined by \(B\),
\begin{align}
    \Tr(B\rho)
    =
    \sum_j x_j\Tr(B\sigma_j).
    \label{eq:rom-proof-linear-functional}
\end{align}
Therefore
\begin{align}
    |\Tr(B\rho)|
    &\le
    \sum_j |x_j|\,|\Tr(B\sigma_j)|
    \notag\\
    &\le
    \beta_{\rm stab}(B)\sum_j|x_j|
    \notag\\
    &=
    \beta_{\rm stab}(B)R(\rho).
    \label{eq:rom-proof-triangle-bound}
\end{align}
Rearranging gives Eq.~\eqref{eq:witness-rom-lower-bound-main}.
\end{proof}


\section{Graph-basis proof: general weights and equal weights}
\label{app:graph-basis-proof}

\subsection{Stabilizer amplitudes in the graph basis}
\label{subsec:graph-basis-stabilizer-amplitudes}

Let \(\ket{\phi}\) be a stabilizer state.  Since \(U_G\) is Clifford, \(U_G^\dagger\ket{\phi}\) is also a stabilizer state.  Therefore there exist an affine support \(V+t\), with \(\dim V=r\), and phase functions \(q\) and \(\ell\) such that
\begin{align}
    U_G^\dagger\ket{\phi}
    =
    2^{-r/2}
    \sum_{u\in V+t}
    i^{\ell(u)}(-1)^{q(u)}\ket{u}.
    \label{eq:rotated-stabilizer-normal-form}
\end{align}
Writing
\begin{align}
    \chi(u):=i^{\ell(u)}(-1)^{q(u)},
    \label{eq:chi-def-appendix}
\end{align}
and multiplying by \(U_G\), we get
\begin{align}
    \ket{\phi}
    =
    2^{-r/2}
    \sum_{u\in V+t}\chi(u)\ket{u_G}.
    \label{eq:stabilizer-graph-basis-expansion}
\end{align}
Hence
\begin{align}
    \braket{u_G|\phi}
    =
    \begin{cases}
    2^{-r/2}\chi(u), & u\in V+t,\\
    0, & u\notin V+t.
    \end{cases}
    \label{eq:graph-basis-amplitude-appendix}
\end{align}
This is the only stabilizer-state input needed for the CWS reduction.
Let \(D=\mathcal C\cap(V+t)\).  From Eq.~\eqref{eq:graph-basis-amplitude-appendix},
\begin{align}
    \braket{\Psi_\alpha|\phi}
    =
    2^{-r/2}
    \sum_{c\in D}\alpha_c^*\chi(c),
    \label{eq:weighted-psi-phi-overlap}
\end{align}
and
\begin{align}
    \sum_{c\in\mathcal C}|
    \alpha_c|^2|\braket{c_G|\phi}|^2
    =
    2^{-r}
    \sum_{c\in D}|\alpha_c|^2.
    \label{eq:weighted-diagonal-overlap}
\end{align}
Therefore
\begin{align}
    \bra{\phi}B_{\rm coh}^{(\alpha)}\ket{\phi}
    =
    2^{-r}
    \left[
    \left|
    \sum_{c\in D}\alpha_c^*\chi(c)
    \right|^2
    -
    \sum_{c\in D}|\alpha_c|^2
    \right].
    \label{eq:weighted-overlap-formula-appendix}
\end{align}
This proves the overlap formula used in Theorem~\ref{thm:weighted-cws-algorithm-main}.  The target numerator is Eq.~\eqref{eq:weighted-target-numerator-main}.

For equal-weight case \(\alpha_c=1/\sqrt K\), Eq.~\eqref{eq:weighted-overlap-formula-appendix} becomes
\begin{align}
    \bra{\phi}B_{\rm coh}\ket{\phi}
    =
    \frac{2^{-r}}{K}
    \left[
    \left|
    \sum_{c\in D}\chi(c)
    \right|^2
    -
    |D|
    \right].
    \label{eq:equal-overlap-formula-appendix}
\end{align}
If \(m=|D|\), then \(|\sum_{c\in D}\chi(c)|\le m\).  This upper bound is achieved by the constant phase \(\chi(c)=1\), which is a valid affine-quadratic phase.  Thus, for a fixed affine support \(A\), the maximal value is just
\begin{align}
    \frac{|A\cap\mathcal C|(|A\cap\mathcal C|-1)}{K2^{\dim A}}.
    \label{eq:equal-fixed-affine-support-value}
\end{align}
Maximizing over affine supports gives Theorem~\ref{thm:equal-weight-affine-profile-main}.

For completeness, we recall why linear word sets do not generate magic in the equal-weight mode.  In the computational basis,
\begin{align}
    \ket{G}
    =
    2^{-n/2}\sum_{x\in\mathbb F_2^n}(-1)^{q_G(x)}\ket{x}.
    \label{eq:graph-state-computational-quadratic}
\end{align}
Therefore
\begin{align}
    \ket{\Psi_{\mathcal C}}
    =
    2^{-n/2}K^{-1/2}
    \sum_{x\in\mathbb F_2^n}
    (-1)^{q_G(x)}
    \left(\sum_{c\in\mathcal C}(-1)^{c\cdot x}\right)
    \ket{x}.
    \label{eq:cws-walsh-profile}
\end{align}
If \(\mathcal C\le\mathbb F_2^n\) is linear, then character orthogonality gives
\begin{align}
    \sum_{c\in\mathcal C}(-1)^{c\cdot x}
    =
    K\,\mathbf 1_{x\in\mathcal C^\perp}.
    \label{eq:linear-character-orthogonality}
\end{align}
Thus \(\ket{\Psi_{\mathcal C}}\) is a uniform quadratic-phase state supported on a linear subspace, hence a stabilizer state.

\subsection{Time complexity of the weighted algorithm}
\label{subsec:weighted-complexity-proof}

We now spell out the weighted algorithm and its running time.  The input is a
CWS word set
\begin{align}
    \mathcal C
    =
    \{c_1,\ldots,c_K\}
    \subseteq \mathbb F_2^n
    \label{eq:weighted-algorithm-input-code}
\end{align}
together with normalized amplitudes \(\{\alpha_c\}_{c\in\mathcal C}\).  The
goal is to compute the stabilizer threshold
\begin{align}
    \beta_{\rm coh}^{(\alpha)}
    =
    \max_{\ket{\phi}\in\STAB_n}
    \left|
    \bra{\phi}B_{\rm coh}^{(\alpha)}\ket{\phi}
    \right|
    \label{eq:weighted-algorithm-goal-beta}
\end{align}
and hence the lower bound
\begin{align}
    R(\Psi_\alpha)
    \ge
    \frac{
    1-\sum_{c\in\mathcal C}|\alpha_c|^4
    }{
    \beta_{\rm coh}^{(\alpha)}
    }.
    \label{eq:weighted-algorithm-goal-rom}
\end{align}

The graph-basis reduction shows that every stabilizer state contributes a
value of the form
\begin{align}
    F(D,\chi)
    =
    2^{-r}
    \left[
    \left|
    \sum_{c\in D}\alpha_c^*\chi(c)
    \right|^2
    -
    \sum_{c\in D}|\alpha_c|^2
    \right],
    \label{eq:weighted-algorithm-objective}
\end{align}
where \(D=\mathcal C\cap A\), \(A=t+V\) is an affine support,
\(r=\dim V\), and \(\chi\) is an affine-quadratic phase on \(A\).  The
algorithm computes the maximum of \(|F(D,\chi)|\) exactly as follows.

\paragraph*{Algorithm.}
Initialize \(\beta=0\).  Then perform the following steps.

\begin{enumerate}[leftmargin=2em]
    \item Enumerate all subsets \(D\subseteq\mathcal C\).  The empty set and
    singleton sets can be skipped, since they contribute zero.

    \item For each nonempty subset
    \begin{align}
        D=\{d_1,\ldots,d_m\},
        \label{eq:weighted-algorithm-subset-D}
    \end{align}
    compute the affine hull
    \begin{align}
        A_D
        :=
        \aff(D)
        =
        d_1+
        \operatorname{span}_{\mathbb F_2}
        \{d_2+d_1,\ldots,d_m+d_1\}.
        \label{eq:weighted-algorithm-affine-hull}
    \end{align}
    Write
    \begin{align}
        A_D=t_D+V_D,
        \qquad
        r_D:=\dim V_D.
        \label{eq:weighted-algorithm-affine-dimension}
    \end{align}

    \item Check whether \(D\) is an exact affine intersection with the CWS
    word set:
    \begin{align}
        A_D\cap\mathcal C=D.
        \label{eq:weighted-algorithm-valid-intersection}
    \end{align}
    If this condition fails, discard \(D\).  Indeed, every affine support
    containing \(D\) must also contain \(\aff(D)\), and therefore must also
    contain the extra CWS words in \(A_D\cap\mathcal C\).

    \item Choose a basis
    \begin{align}
        v_1,\ldots,v_{r_D}
        \label{eq:weighted-algorithm-basis}
    \end{align}
    of \(V_D\).  For every \(c\in D\), compute its coordinate
    \(y(c)\in\mathbb F_2^{r_D}\) defined by
    \begin{align}
        c
        =
        t_D+
        \sum_{j=1}^{r_D}y_j(c)v_j.
        \label{eq:weighted-algorithm-coordinate}
    \end{align}

    \item Enumerate all affine-quadratic phases on \(A_D\), modulo an
    irrelevant global phase.  In the coordinates \(y\in\mathbb F_2^{r_D}\),
    these phases may be written as
    \begin{align}
        \chi_{a,b}(y)
        =
        i^{\sum_{j=1}^{r_D}a_jy_j}
        (-1)^{
        \sum_{1\le j<k\le r_D}b_{jk}y_jy_k
        },
        \label{eq:weighted-phase-parameterization}
    \end{align}
    where $a_j\in\mathbb Z_4,b_{jk}\in\mathbb F_2.$

    \item For each phase \(\chi_{a,b}\), evaluate
    \begin{align}
        F(D,a,b)
        =
        2^{-r_D}
        \left[
        \left|
        \sum_{c\in D}
        \alpha_c^*\chi_{a,b}(y(c))
        \right|^2
        -
        \sum_{c\in D}|\alpha_c|^2
        \right].
        \label{eq:weighted-algorithm-candidate-value}
    \end{align}
    Update
    \begin{align}
        \beta
        \leftarrow
        \max\{\beta,|F(D,a,b)|\}.
        \label{eq:weighted-algorithm-update}
    \end{align}
\end{enumerate}

After all subsets and all phases are tested, the algorithm returns
\begin{align}
    \beta_{\rm coh}^{(\alpha)}=\beta.
    \label{eq:weighted-algorithm-output}
\end{align}

We now estimate the running time.  There are \(2^K\) possible subsets
\(D\subseteq\mathcal C\).  For each subset, computing \(\aff(D)\), finding a
basis of \(V_D\), testing membership in \(A_D\cap\mathcal C\), and computing
the coordinates \(y(c)\) can all be done by Gaussian elimination over
\(\mathbb F_2\).  A crude upper bound for this linear-algebraic part is
\begin{align}
    O(n^3+Kn^2)
    \label{eq:gaussian-complexity-bound}
\end{align}
per subset.

For a valid subset \(D\), the affine dimension satisfies
\begin{align}
    r_D
    =
    \dim\aff(D)
    \le
    |D|-1
    \le
    K-1.
    \label{eq:weighted-affine-dimension-bound}
\end{align}
The number of phases in Eq.~\eqref{eq:weighted-phase-parameterization} is
\begin{align}
    4^{r_D}2^{\binom{r_D}{2}}
    \le
    4^K2^{\binom K2}.
    \label{eq:phase-count-bound}
\end{align}
For each phase, evaluating Eq.~\eqref{eq:weighted-algorithm-candidate-value}
requires a sum over at most \(K\) CWS words.  Therefore the total running time
is bounded by
\begin{align}
    T_{\rm wt}(n,K)
    &\le
    O\!\left(
    2^K(n^3+Kn^2)
    +
    2^K K\,4^K2^{\binom K2}
    \right)
    \notag\\
    &=
    O\!\left(
    2^K(n^3+Kn^2)
    +
    K2^{\binom K2+3K}
    \right).
    \label{eq:weighted-runtime-appendix}
\end{align}
This is the claimed fixed-parameter algorithm in the number \(K=|\mathcal C|\)
of CWS words.

\subsection{Complexity of the equal-weight algorithm}
\label{subsec:equal-complexity-proof}

For equal weights,
\begin{align}
    \alpha_c=\frac1{\sqrt K},
    \qquad c\in\mathcal C,
    \label{eq:equal-weight-alpha-complexity}
\end{align}
the phase enumeration in the weighted algorithm is unnecessary.  Indeed, for
a fixed affine support \(A\), let
\begin{align}
    D=A\cap\mathcal C,
    \qquad
    m=|D|,
    \qquad
    r=\dim A.
    \label{eq:equal-algorithm-D-m-r}
\end{align}
The graph-basis formula gives
\begin{align}
    \bra{\phi}B_{\rm coh}\ket{\phi}
    =
    \frac{2^{-r}}{K}
    \left[
    \left|
    \sum_{c\in D}\chi(c)
    \right|^2
    -
    |D|
    \right].
    \label{eq:equal-algorithm-phase-formula}
\end{align}
Since \(|\chi(c)|=1\), one has
\begin{align}
    \left|
    \sum_{c\in D}\chi(c)
    \right|
    \le
    m.
    \label{eq:equal-phase-upper-bound}
\end{align}
This bound is achieved by the constant phase \(\chi(c)=1\), which is a valid
affine-quadratic phase.  Therefore, for equal weights, the best value for a
fixed affine support is
\begin{align}
    \frac{m(m-1)}{K2^r}.
    \label{eq:equal-fixed-support-best-value}
\end{align}

Thus the equal-weight algorithm is simpler.

\paragraph*{Equal-weight algorithm.}
Initialize \(\beta=0\).  Enumerate all subsets \(D\subseteq\mathcal C\).  For
each subset, compute
\begin{align}
    A_D=\aff(D),
    \qquad
    r_D=\dim A_D.
    \label{eq:equal-algorithm-affine-hull}
\end{align}
Check whether
\begin{align}
    A_D\cap\mathcal C=D.
    \label{eq:equal-algorithm-valid-D}
\end{align}
If not, discard \(D\).  If yes, evaluate
\begin{align}
    G(D)
    :=
    \frac{|D|(|D|-1)}{K2^{r_D}}.
    \label{eq:equal-subset-candidate-value}
\end{align}
Update
\begin{align}
    \beta\leftarrow \max\{\beta,G(D)\}.
    \label{eq:equal-algorithm-update}
\end{align}
At the end, the output is
\begin{align}
    \beta_{\rm coh}=\beta.
    \label{eq:equal-algorithm-output}
\end{align}

The same Gaussian-elimination step as above gives a per-subset cost bounded
by Eq.~\eqref{eq:gaussian-complexity-bound}.  Since there are \(2^K\) subsets
and no phase enumeration, the total running time is
\begin{align}
    T_{\rm eq}(n,K)
    =
    O\!\left(2^K(n^3+Kn^2)\right).
    \label{eq:equal-subset-runtime-appendix}
\end{align}

Equivalently, the equal-weight threshold can be computed from the
affine-intersection profile
\begin{align}
    M_r(\mathcal C)
    =
    \max_{\substack{A\subseteq\mathbb F_2^n\\ A\ {\rm affine}\\ \dim A=r}}
    |A\cap\mathcal C|,
    \label{eq:equal-complexity-affine-profile}
\end{align}
through
\begin{align}
    \beta_{\rm coh}
    =
    \max_{0\le r\le n}
    \frac{
    M_r(\mathcal C)\bigl(M_r(\mathcal C)-1\bigr)
    }{
    K2^r
    }.
    \label{eq:equal-complexity-profile-formula}
\end{align}
One may alternatively enumerate affine flats directly.  Since the number of
\(r\)-dimensional affine flats in \(\mathbb F_2^n\) is
\begin{align}
    2^{n-r}{n\brack r}_2,
    \label{eq:number-of-affine-flats}
\end{align}
where \({n\brack r}_2=\prod_{i=0}^{r-1}\frac{2^n-2^i}{2^r-2^i}\) is the binary Gaussian binomial coefficient
\footnote{The coefficient \({n\brack r}_2\) counts the number of \(r\)-dimensional linear subspaces of \(\mathbb F_2^n\). Indeed, the numerator counts ordered choices of \(r\) linearly independent vectors in \(\mathbb F_2^n\), while the denominator accounts for the different ordered bases spanning the same \(r\)-dimensional subspace. Each such linear subspace has \(2^{n-r}\) distinct affine translates, corresponding to the cosets of the subspace in \(\mathbb F_2^n\), which gives the factor \(2^{n-r}\) above.}, the
direct affine-flat enumeration has running time
\begin{align}
    T_{\rm flat}(n,K)
    =
    O\!\left(
    K\sum_{r=0}^n 2^{n-r}{n\brack r}_2
    \right)
    =
    2^{n^2/4+O(n)}K.
    \label{eq:affine-flat-runtime-appendix}
\end{align}
For small \(K\), subset enumeration is preferable; for small \(n\) but larger
\(K\), affine-flat enumeration may be faster.

\section{Calculation details, Bell checks, and Pauli expansions}
\label{app:calculation-details}

\subsection{Bell-type numerical checks and threshold reductions}
\label{subsec:bell-checks}

This appendix records the Bell-inspired numerical checks shown in
Fig.~\ref{fig:combined-bell-examples}.  These checks are not used in the CWS
affine-profile theorem.  Their role is to illustrate the general principle
of Theorem~\ref{thm:witness-rom-bound}: once a calibrated Hermitian
observable has a known stabilizer threshold, its expectation value gives a
certified lower bound on RoM.

The numerical implementation performs three related checks.  The first one uses the tilted-CHSH family introduced in Ref.~\cite{AcinMassarPironio2012Tilted} and further analyzed in the self-testing literature~\cite{YangNavascues2013Tilted,BampsPironio2015Tilted}.  First, for the two-qubit
family
\begin{align}
    \ket{\psi_\theta}
    =
    \cos\theta\ket{00}+\sin\theta\ket{11},
    \label{eq:two-qubit-family-appendix}
\end{align}
it compares the exact RoM with the tilted-CHSH lower bound.  The tilt
parameter is chosen as
\begin{align}
    \alpha(\theta)
    =
    \frac{2}{\sqrt{1+2\tan^2(2\theta)}},
    \label{eq:tilted-alpha-theta-appendix}
\end{align}
equivalently implemented in the numerically stable form
\begin{align}
    \alpha(\theta)
    =
    \frac{2\cos(2\theta)}
    {\sqrt{\cos^2(2\theta)+2\sin^2(2\theta)}}.
    \label{eq:tilted-alpha-stable-appendix}
\end{align}
For this convention, the tilted-CHSH quantum value and stabilizer threshold
used in the plot are
\begin{align}
    Q_{\rm tilt}(\alpha)
    &=
    \sqrt{8+2\alpha^2},
    \label{eq:tilted-quantum-value-appendix}\\
    \beta_{\rm stab}^{\rm tilt}(\alpha)
    &=
    \max(2+\alpha,2\sqrt2).
    \label{eq:tilted-stabilizer-threshold-appendix}
\end{align}
Thus the plotted lower bound is
\begin{align}
    R_{\rm lb}(\theta)
    =
    \frac{Q_{\rm tilt}(\alpha(\theta))}
    {\beta_{\rm stab}^{\rm tilt}(\alpha(\theta))}
    =
    \frac{\sqrt{8+2\alpha(\theta)^2}}
    {\max(2+\alpha(\theta),2\sqrt2)}.
    \label{eq:tilted-chsh-rom-appendix}
\end{align}

Second, following the three-qubit Bell inequality involving two- and three-body correlations introduced for GHZ and W states~\cite{Cabello2002GHZW} and later used as a magic-witness example~\cite{Macedo2025}, the numerical implementation evaluates the standard three-qubit W state and a one-parameter family of generalized W states,
\begin{align}
    \ket{W_G(\theta,\phi)}
    =
    \sin\theta\sin\phi\ket{001}
    +
    \sin\theta\cos\phi\ket{010}
    +
    \cos\theta\ket{100}.
    \label{eq:w-state-appendix}
\end{align}
The line shown in Fig.~\ref{fig:3-qubit} fixes
\begin{align}
    \phi=\frac{\pi}{4},
    \label{eq:w-line-phi-appendix}
\end{align}
and scans \(\theta\).  The Bell-type functional is
\begin{align}
    B
    =
    S_3+R_2,
    \qquad
    \beta_{\rm stab}(B)=6.
    \label{eq:w-state-bell-operator-appendix}
\end{align}
Here \(S_3\) is the signed three-body correlator part and \(R_2\) is the
two-body correlator part.  In terms of two local dichotomic measurements
\(A_i^{(0)},A_i^{(1)}\) for each qubit, we use the following convention for the signed three-body part, consistent with the Bell functional discussed in Refs.~\cite{Cabello2002GHZW,Macedo2025}:
\begin{align}
S_3
&=
\sum_{\boldsymbol{x}\in\{0,1\}^3}
(-1)^{x_1x_2+x_1x_3+x_2x_3+x_1+x_2+x_3}
\left\langle
\prod_{i=1}^3 A_i^{(x_i)}
\right\rangle .
\end{align}
where
\begin{align}
    R_2
    &=
    \langle A_1^{(0)}A_2^{(1)}\rangle
    +
    \langle A_1^{(0)}A_3^{(1)}\rangle
    +
    \langle A_2^{(0)}A_3^{(1)}\rangle
    \notag\\
    &\quad
    +
    \langle A_1^{(1)}A_2^{(0)}\rangle
    +
    \langle A_1^{(1)}A_3^{(0)}\rangle
    +
    \langle A_2^{(1)}A_3^{(0)}\rangle.
    \label{eq:r2-definition-appendix}
\end{align}
For each state, the local measurement directions are optimized numerically
over the Bloch sphere, giving an optimized value \(B^\star\).  The plotted
three-qubit lower bound is therefore
\begin{align}
    R_{\rm lb}
    =
    \frac{B^\star}{6}.
    \label{eq:w-rom-lower-bound-appendix}
\end{align}

Finally, the numerical implementation computes exact RoM for \(n\le3\) by solving the stabilizer
decomposition linear program.  It generates all pure stabilizer states by a
Clifford-orbit breadth-first search and solves
\begin{align}
    \min_x \|x\|_1
    \quad
    \text{subject to}
    \quad
    \rho=\sum_j x_j\sigma_j,
    \label{eq:exact-rom-lp-appendix}
\end{align}
where \(\sigma_j\) range over pure stabilizer states.  The expected counts
are \(6\), \(60\), and \(1080\) pure stabilizer states for one, two, and
three qubits, respectively.  The blue curves in Fig.~\ref{fig:combined-bell-examples}
come from this exact LP, while the orange curves come from the corresponding
witness lower bounds.

\subsection{CWS data used in the affine-profile computations}
\label{subsec:cws-data-computations}

For reproducibility, we record the CWS data used in Table~\ref{tab:cws-high-distance-examples}.  The equal-weight threshold only depends on the classical word set \(\mathcal C\).  The graph stabilizer specifies the physical CWS code and the Pauli realization of the coherence witness.

\paragraph{The \texorpdfstring{\(((5,6,2))\)}{((5,6,2))} five-cycle code.}
The graph is the five-cycle with stabilizer generators
\begin{align}
    g_i
    =
    X_iZ_{i-1}Z_{i+1},
    \qquad
    i\in\mathbb Z_5,
    \label{eq:five-cycle-stabilizer-appendix}
\end{align}
where indices are taken modulo five.  The word set is
\begin{align}
    \mathcal C_{5,6,2}
    =
    \{00000,11010,01101,10110,01011,10101\}.
    \label{eq:562-wordset-appendix}
\end{align}
The numerator is
\begin{align}
    \bra{\Psi_{\mathcal C}}B_{\rm coh}\ket{\Psi_{\mathcal C}}
    =
    1-\frac16
    =
    \frac56,
    \label{eq:562-numerator-appendix}
\end{align}
and the affine-profile computation gives \((r_*,m_*)=(2,3)\), hence
\begin{align}
    \beta_{\rm coh}
    =
    \frac{3\cdot2}{6\cdot2^2}
    =
    \frac14.
    \label{eq:562-beta-appendix}
\end{align}
Therefore \(R\ge10/3\).

\paragraph*{The \texorpdfstring{\(((9,12,3))\)}{((9,12,3))} cyclic CWS code.}
The graph is the nine-cycle, with generators \(g_i=X_iZ_{i-1}Z_{i+1}\).  We use the CWS word set $\mathcal C_{9,12,3}$
\begin{align}
=\bigl\{&
000000000,100100100,010001100,110101000,
\notag\\
&000110001,100010101,011001010,111101110,
\notag\\
&001010011,101110111,011111111,111011011
\bigr\}.
\label{eq:912-wordset-appendix}
\end{align}

\paragraph*{The \texorpdfstring{\(((10,18,3))\)}{((10,18,3))} ring CWS code.}
The graph is the ten-cycle, again with generators \(g_i=X_iZ_{i-1}Z_{i+1}\).  The word set is $\mathcal C_{10,18,3}$
\begin{align}=\bigl\{&
0000000000,1101001100,0011001010,0000011111,
\notag\\
&0010001001,1111100000,1000111110,1100100101,
\notag\\
&0101101101,0001000110,1010010010,0100110100,
\notag\\
&1001010111,1011010001,0110111000,0101110010,
\notag\\
&1110100011,0111111011
\bigr\}.
\label{eq:1018-wordset-appendix}
\end{align}

\paragraph*{The \texorpdfstring{\(((10,20,3))\)}{((10,20,3))} double-ring CWS code.}
For the double-ring code we use the stabilizer generators listed in Ref.~\cite{Cross2009}:
\begin{align}
    S_1&=\mathsf{XZIIZZIIII}, & S_6&=\mathsf{ZIIIIXZIIZ},
    \notag\\
    S_2&=\mathsf{ZXZIIIZIII}, & S_7&=\mathsf{IZIIIZXZII},
    \notag\\
    S_3&=\mathsf{IZXZIIIZII}, & S_8&=\mathsf{IIZIIIZXZI},
    \notag\\
    S_4&=\mathsf{IIZXZIIIZI}, & S_9&=\mathsf{IIIZIIIZXZ},
    \notag\\
    S_5&=\mathsf{ZIIZXIIIIZ}, & S_{10}&=\mathsf{IIIIZZIIZX}.
    \label{eq:double-ring-stabilizers-appendix}
\end{align}
The word set is $\mathcal C_{10,20,3}$
\begin{align}
=\bigl\{&
0000000000,1100101101,1100000100,0010010010,
\notag\\
&1001100100,0111011011,1101111110,0010111011,
\notag\\
&1001101111,0111010000,1111000101,1011010100,
\notag\\
&0101100000,1011011111,0101101011,0011000001,
\notag\\
&0000101001,1110010110,0001111010,1110111111
\bigr\}.
\label{eq:1020-wordset-appendix}
\end{align}

\paragraph*{The SSW code.}
The SSW word set used in the last row of Table~\ref{tab:cws-high-distance-examples} is generated by Eq.~\eqref{eq:ssw-standard-label-main}.  For \(n=7\), the resulting affine profile is Eq.~\eqref{eq:ssw-seven-profile-main}.

\section{NISQ implementation, measurement cost, and noise tolerance}
\label{app:nisq-discussion}

The affine-profile formula gives the stabilizer threshold of the CWS
coherence witness, but an experimental or NISQ use of the result requires
one more step: the witness must be expressed as a measurable Pauli
observable.  This is where the CWS structure is useful again.  The witness is
not treated as a black-box projector.  Instead, its Pauli expansion follows
directly from the graph-state stabilizer expansion
\begin{align}
    \ket{G}\bra{G}
    =
    2^{-n}
    \sum_{s\in S_G}s,
    \label{eq:nisq-graph-projector}
\end{align}
and hence
\begin{align}
    \ket{c_G}\bra{c'_G}
    =
    2^{-n}
    \sum_{s\in S_G}Z^c s Z^{c'}.
    \label{eq:nisq-transition-pauli}
\end{align}
Substituting Eq.~\eqref{eq:nisq-transition-pauli} into the coherence witness
gives a finite Pauli expansion whose coefficients are determined entirely by
the graph \(G\), the word set \(\mathcal C\), and the amplitudes
\(\alpha\).  Therefore, once the affine-profile threshold has been computed,
the certification protocol is conceptually simple: prepare the coherent CWS
superposition, estimate the Pauli correlators appearing in
\(B_{\rm coh}^{(\alpha)}\), and compare the observed value with the
stabilizer threshold
\begin{align}
    \beta_{\rm coh}^{(\alpha)}
    =
    \max_{\sigma\in\STAB_n}
    \left|
    \Tr\!\left(B_{\rm coh}^{(\alpha)}\sigma\right)
    \right|.
    \label{eq:nisq-stabilizer-threshold}
\end{align}
If the observed value exceeds this threshold, Theorem~\ref{thm:witness-rom-bound}
gives a quantitative lower bound on robustness of magic.

The \(((5,6,2))\) code provides a concrete small-scale illustration.  In
that case the collected Pauli expansion of \(B_{\rm coh}\) contains \(240\)
nonzero Pauli strings with coefficients \(\pm 1/96\).  This number should be
interpreted only as a naive correlator count, not as an optimized
measurement-setting count.  Many Pauli strings can be grouped into common
local Pauli bases, and classical-shadow or derandomized-shadow methods may
further reduce the number of experimental settings.  The important point is
that the witness is a Pauli observable with an analytically known stabilizer
threshold; it does not require full state tomography.

\subsection{A global depolarizing benchmark}
\label{subsec:general-noise-tolerance}

We next record a simple noise benchmark for the same measurement protocol.
The purpose of this calculation is not to model a realistic device, but to
show how the witness value degrades under a standard global depolarizing
channel.  The calculation is general and applies to any weighted CWS
superposition.

First note that the coherence witness is traceless.  Indeed,
\begin{align}
    \Tr B_{\rm coh}^{(\alpha)}
    &=
    \Tr\!\left(\ket{\Psi_\alpha}\bra{\Psi_\alpha}\right)
    -
    \Tr\!\left(
    \sum_{c\in\mathcal C}
    |\alpha_c|^2
    \ket{c_G}\bra{c_G}
    \right)
    \notag\\
    &=
    1-
    \sum_{c\in\mathcal C}|\alpha_c|^2
    =
    0.
    \label{eq:coherence-witness-traceless}
\end{align}
Thus the maximally mixed component contributes no signal.  Let
\begin{align}
    \rho_p
    =
    (1-p)\ket{\Psi_\alpha}\bra{\Psi_\alpha}
    +
    p\frac{I}{2^n}
    \label{eq:global-depolarized-state}
\end{align}
be the globally depolarized target state, and define the ideal numerator
\begin{align}
    N_\alpha
    :=
    \bra{\Psi_\alpha}
    B_{\rm coh}^{(\alpha)}
    \ket{\Psi_\alpha}
    =
    1-
    \sum_{c\in\mathcal C}|\alpha_c|^4 .
    \label{eq:noise-numerator-alpha}
\end{align}
Then the measured witness value under global depolarizing noise is simply
rescaled:
\begin{align}
    \Tr\!\left(B_{\rm coh}^{(\alpha)}\rho_p\right)
    &=
    (1-p)
    \bra{\Psi_\alpha}
    B_{\rm coh}^{(\alpha)}
    \ket{\Psi_\alpha}
    +
    p\,2^{-n}\Tr B_{\rm coh}^{(\alpha)}
    \notag\\
    &=
    (1-p)N_\alpha .
    \label{eq:depolarized-witness-value}
\end{align}
Combining this with the witness lower bound gives
\begin{align}
    R(\rho_p)
    \ge
    \frac{(1-p)N_\alpha}{\beta_{\rm coh}^{(\alpha)}}.
    \label{eq:depolarized-rom-bound}
\end{align}
The same witness remains nontrivial as long as the noisy expectation still
exceeds the stabilizer threshold, namely
\begin{align}
    (1-p)N_\alpha
    >
    \beta_{\rm coh}^{(\alpha)}.
    \label{eq:depolarizing-nontrivial-condition}
\end{align}
Equivalently,
\begin{align}
    p
    <
    1-
    \frac{\beta_{\rm coh}^{(\alpha)}}{N_\alpha}.
    \label{eq:depolarizing-threshold-general}
\end{align}

For equal weights, \(N_\alpha=1-1/K\).  Thus every row of
Table~\ref{tab:cws-high-distance-examples} immediately gives a corresponding
global-depolarizing tolerance.  For example, for the \(((5,6,2))\) CWS state,
one has
\begin{align}
    K=6,
    \qquad
    N_\alpha=\frac56,
    \qquad
    \beta_{\rm coh}=\frac14.
    \label{eq:562-noise-inputs}
\end{align}
Therefore the witness remains nontrivial whenever
\begin{align}
    p
    <
    1-
    \frac{1/4}{5/6}
    =
    \frac{7}{10},
    \label{eq:562-depolarizing-threshold}
\end{align}
and the noisy robustness lower bound is
\begin{align}
    R(\rho_p)
    \ge
    \frac{10}{3}(1-p).
    \label{eq:562-noisy-rom-bound}
\end{align}
This example should be viewed only as a benchmark calculation.  Realistic
local noise, coherent control errors, and readout errors require a separate
analysis.

\subsection{Relation to measurement-limited magic certification}
\label{subsec:nisq-reduced-polytope-comparison}

The present approach is complementary to recent measurement-limited
approaches to magic certification.  In Ref.~\cite{liu2026graph},
one begins with a chosen set of Pauli measurements and studies the projection
of the stabilizer polytope onto the corresponding measured coordinates.  The
tractability of that reduced problem is controlled by the frustration graph
of the measurement set together with Pauli sign-dependency constraints.

Here the order of construction is different.  We do not begin from an
arbitrary Pauli measurement set.  Instead, the CWS code first selects a
structured coherence observable \(B_{\rm coh}^{(\alpha)}\).  The Pauli
measurement set is then inherited from the expansion of this observable, and
the stabilizer threshold is controlled by the affine-incidence geometry of
the CWS word set.  Thus both approaches replace full stabilizer-polytope
optimization by a combinatorial problem, but the relevant combinatorial
objects are different: frustration graphs for measurement-limited reduced
RoM, and affine flats for CWS codeword coherence.

There are also important limitations.  The certificate discussed here
assumes calibrated Pauli measurements and is therefore a magic witness, not a
device-independent Bell inequality \cite{acin2006bell, acin2007device}.  A genuine Bell version would require
comparison with a local-hidden-variable threshold.  Moreover, the witness is
tailored to a chosen CWS word set and is most meaningful when the target
preparation is the coherent CWS superposition, rather than an arbitrary state
in the code space.

A natural future direction is to search for Bell inequalities while
preserving the coherence-witness structure.  Instead of optimizing over
arbitrary Bell operators, one may restrict to Pauli observables satisfying
\begin{align}
    \bra{c_G}B\ket{c_G}=0,
    \qquad
    c\in\mathcal C.
    \label{eq:coherence-preserving-bell-constraint}
\end{align}
This constraint ensures that the Bell functional is sensitive to codeword
coherence rather than diagonal codeword population.  Such a constrained
search would preserve the CWS mechanism developed in this work, unlike a
generic Bell-operator optimization over the same Pauli support.

\bibliographystyle{unsrt}
\bibliography{sample}

@article{gross2006hudson,
  title={Hudson’s theorem for finite-dimensional quantum systems},
  author={Gross, David},
  journal={Journal of mathematical physics},
  volume={47},
  number={12},
  year={2006},
  publisher={AIP Publishing}
}

@inproceedings{mckague2011self,
  title={Self-testing graph states},
  author={McKague, Matthew},
  booktitle={Conference on Quantum Computation, Communication, and Cryptography},
  pages={104--120},
  year={2011},
  organization={Springer}
}

@article{brunner2012testing,
  title={Testing the structure of multipartite entanglement with Bell inequalities},
  author={Brunner, Nicolas and Sharam, James and V{\'e}rtesi, Tam{\'a}s},
  journal={Physical review letters},
  volume={108},
  number={11},
  pages={110501},
  year={2012},
  publisher={APS}
}

@article{kuo2024self,
  title={Self-Testing Quantum Error Correcting Codes: Analyzing Computational Hardness},
  author={Kuo, En-Jui and Hsu, Li-Yi},
  journal={arXiv preprint arXiv:2409.01987},
  year={2024}
}

@article{bowles2018device,
  title={Device-independent entanglement certification of all entangled states},
  author={Bowles, Joseph and {\v{S}}upi{\'c}, Ivan and Cavalcanti, Daniel and Ac{\'\i}n, Antonio},
  journal={Physical review letters},
  volume={121},
  number={18},
  pages={180503},
  year={2018},
  publisher={APS}
}

@article{hostens2005stabilizer,
  title={Stabilizer states and Clifford operations for systems of arbitrary dimensions and modular arithmetic},
  author={Hostens, Erik and Dehaene, Jeroen and De Moor, Bart},
  journal={Physical Review A—Atomic, Molecular, and Optical Physics},
  volume={71},
  number={4},
  pages={042315},
  year={2005},
  publisher={APS}
}

@article{rains1999quantum,
  title={Quantum codes of minimum distance two},
  author={Rains, Eric M.},
  journal={IEEE Transactions on Information theory},
  volume={45},
  number={1},
  pages={266--271},
  year={1999},
  publisher={IEEE}
}

@article{rains1997nonadditive,
  title={A nonadditive quantum code},
  author={Rains, Eric M and Hardin, RH and Shor, Peter W and Sloane, Neil JA},
  journal={arXiv preprint quant-ph/9703002},
  year={1997}
}

@incollection{eczoo_rains,
 title={\(((2m+1,3 \times 2^{2m-3},2))\) Rains code},
 booktitle={The Error Correction Zoo},
 year={2026},
 editor={Albert, Victor V. and Faist, Philippe},
 eprint={2606.11484},
 doi={10.48550/arXiv.2606.11484},
 url={https://errorcorrectionzoo.org/c/rains}
}

@article{crew2025magic,
  title={Magic entropy in hybrid spin-boson systems},
  author={Crew, Samuel and Li, Ying-Lin and Li, Heng-Hsi and Chang, Po-Yao},
  journal={Reports on Progress in Physics},
  year={2025}
}

@article{hamaguchi2024handbook,
  title={Handbook for quantifying robustness of magic},
  author={Hamaguchi, Hiroki and Hamada, Kou and Yoshioka, Nobuyuki},
  journal={Quantum},
  volume={8},
  pages={1461},
  year={2024},
  publisher={Verein zur F{\"o}rderung des Open Access Publizierens in den Quantenwissenschaften}
}

@article{hughes2014quantum,
  title={Quantum non-Gaussianity witnesses in phase space},
  author={Hughes, Catherine and Genoni, Marco G and Tufarelli, Tommaso and Paris, Matteo GA and Kim, MS},
  journal={Physical Review A},
  volume={90},
  number={1},
  pages={013810},
  year={2014},
  publisher={APS}
}

@article{takahashi1986wigner,
  title={Wigner and Husimi functions in quantum mechanics},
  author={Takahashi, Kin'ya},
  journal={Journal of the Physical Society of Japan},
  volume={55},
  number={3},
  pages={762--779},
  year={1986},
  publisher={The Physical Society of Japan}
}

@article{harriman1988some,
  title={Some properties of the Husimi function},
  author={Harriman, John E},
  journal={The Journal of chemical physics},
  volume={88},
  number={10},
  pages={6399--6408},
  year={1988},
  publisher={American Institute of Physics}
}

@incollection{eczoo_qubit_9_12_3,
 title={\(((9,12,3))\) qubit code},
 booktitle={The Error Correction Zoo},
 year={2026},
 editor={Albert, Victor V. and Faist, Philippe},
 eprint={2606.11484},
 doi={10.48550/arXiv.2606.11484},
 url={https://errorcorrectionzoo.org/c/qubit_9_12_3}
}

@incollection{eczoo_arvind,
 title={\(((n,1+n(q-1),2))_q\) union stabilizer code},
 booktitle={The Error Correction Zoo},
 year={2026},
 editor={Albert, Victor V. and Faist, Philippe},
 eprint={2606.11484},
 doi={10.48550/arXiv.2606.11484},
 url={https://errorcorrectionzoo.org/c/arvind}
}

@article{weinbub2018recent,
  title={Recent advances in Wigner function approaches},
  author={Weinbub, Josef and Ferry, DK},
  journal={Applied Physics Reviews},
  volume={5},
  number={4},
  year={2018},
  publisher={AIP Publishing}
}

@article{mele2026symplectic,
  title={Symplectic Rank of Non-Gaussian Quantum States},
  author={Mele, Francesco A and Oliviero, Salvatore FE and Upreti, Varun and Chabaud, Ulysse},
  journal={PRX Quantum},
  volume={7},
  number={2},
  pages={020366},
  year={2026},
  publisher={APS}
}

@article{hahn2026assessing,
  title={Assessing non-Gaussian quantum state conversion with the stellar rank},
  author={Hahn, Oliver and Garnier, Maxime and Ferrini, Giulia and Ferraro, Alessandro and Chabaud, Ulysse},
  journal={Quantum},
  volume={10},
  pages={2095},
  year={2026},
  publisher={Verein zur F{\"o}rderung des Open Access Publizierens in den Quantenwissenschaften}
}

@article{chabaud2020stellar,
  title={Stellar representation of non-Gaussian quantum states},
  author={Chabaud, Ulysse and Markham, Damian and Grosshans, Fr{\'e}d{\'e}ric},
  journal={Physical Review Letters},
  volume={124},
  number={6},
  pages={063605},
  year={2020},
  publisher={APS}
}

@article{gottesman1998heisenberg,
  title={The Heisenberg representation of quantum computers},
  author={Gottesman, Daniel},
  journal={arXiv preprint quant-ph/9807006},
  year={1998}
}

@article{aaronson2004improved,
  title={Improved simulation of stabilizer circuits},
  author={Aaronson, Scott and Gottesman, Daniel},
  journal={Physical Review A—Atomic, Molecular, and Optical Physics},
  volume={70},
  number={5},
  pages={052328},
  year={2004},
  publisher={APS}
}

@article{webster2022universal,
  title={Universal fault-tolerant quantum computing with stabilizer codes},
  author={Webster, Paul and Vasmer, Michael and Scruby, Thomas R and Bartlett, Stephen D},
  journal={Physical Review Research},
  volume={4},
  number={1},
  pages={013092},
  year={2022},
  publisher={APS}
}

@misc{LiuXu2026CWSMagic,
  title        = {Quantifying Nonstabilizerness of Codeword-Stabilized Codes},
  author       = {Liu, Yuan and Xu, Ke-Mi},
  year         = {2026},
  eprint       = {2608.22017},
  archivePrefix= {arXiv},
  primaryClass = {quant-ph}
}

@article{AcinMassarPironio2012Tilted,
  title        = {Randomness versus Nonlocality and Entanglement},
  author       = {Ac{\'i}n, Antonio and Massar, Serge and Pironio, Stefano},
  journal      = {Physical Review Letters},
  volume       = {108},
  pages        = {100402},
  year         = {2012},
  doi          = {10.1103/PhysRevLett.108.100402}
}

@article{BampsPironio2015Tilted,
  title        = {Sum-of-Squares Decompositions for a Family of {Clauser-Horne-Shimony-Holt}-Like Inequalities and Their Application to Self-Testing},
  author       = {Bamps, C{\'e}dric and Pironio, Stefano},
  journal      = {Physical Review A},
  volume       = {91},
  pages        = {052111},
  year         = {2015},
  doi          = {10.1103/PhysRevA.91.052111}
}

@article{YangNavascues2013Tilted,
  title        = {Robust Self-Testing of Unknown Quantum Systems into Any Entangled Two-Qubit States},
  author       = {Yang, Tzyh Haur and Navascu{\'e}s, Miguel},
  journal      = {Physical Review A},
  volume       = {87},
  pages        = {050102},
  year         = {2013},
  doi          = {10.1103/PhysRevA.87.050102}
}

@article{liu2026graph,
  title={Graph Theoretic Approach to Quantum Nonstabilizerness},
  author={Liu, Yingjian and Gasull, Albert and Hu, Mengyao and Zhang, Ruiyun and Baccari, Flavio and Tura, Jordi},
  journal={arXiv preprint arXiv:2607.26154},
  year={2026}
}

@article{Cabello2002GHZW,
  title        = {Bell's Theorem with and without Inequalities for the Three-Qubit {Greenberger-Horne-Zeilinger} and {W} States},
  author       = {Cabello, Ad{\'a}n},
  journal      = {Physical Review A},
  volume       = {65},
  pages        = {032108},
  year         = {2002},
  doi          = {10.1103/PhysRevA.65.032108}
}

@article{acin2007device,
  title={Device-independent security of quantum cryptography against collective attacks},
  author={Ac{\'\i}n, Antonio and Brunner, Nicolas and Gisin, Nicolas and Massar, Serge and Pironio, Stefano and Scarani, Valerio},
  journal={Physical Review Letters},
  volume={98},
  number={23},
  pages={230501},
  year={2007},
  doi={10.1103/PhysRevLett.98.230501}
}

@article{acin2006bell,
  title={From Bell’s theorem to secure quantum key distribution},
  author={Acin, Antonio and Gisin, Nicolas and Masanes, Lluis},
  journal={Physical review letters},
  volume={97},
  number={12},
  pages={120405},
  year={2006},
  doi={10.1103/PhysRevLett.97.120405}
}

@incollection{eczoo_qubit_5_6_2,
 title={\(((5,6,2))\) qubit code},
 booktitle={The Error Correction Zoo},
 year={2026},
 editor={Albert, Victor V. and Faist, Philippe},
 eprint={2606.11484},
 doi={10.48550/arXiv.2606.11484},
 url={https://errorcorrectionzoo.org/c/qubit_5_6_2}
}

@article{Horodecki2009,
  title = {Quantum entanglement},
  author = {Horodecki, Ryszard and Horodecki, Pawe\l{} and Horodecki, Micha\l{} and Horodecki, Karol},
  journal = {Rev. Mod. Phys.},
  volume = {81},
  issue = {2},
  pages = {865--942},
  numpages = {0},
  year = {2009},
  month = {Jun},
  publisher = {American Physical Society},
  doi = {10.1103/RevModPhys.81.865},
  url = {https://link.aps.org/doi/10.1103/RevModPhys.81.865}
}

@article{Streltsov2017,
  title = {Colloquium: Quantum coherence as a resource},
  author = {Streltsov, Alexander and Adesso, Gerardo and Plenio, Martin B.},
  journal = {Rev. Mod. Phys.},
  volume = {89},
  issue = {4},
  pages = {041003},
  numpages = {34},
  year = {2017},
  month = {Oct},
  publisher = {American Physical Society},
  doi = {10.1103/RevModPhys.89.041003},
  url = {https://link.aps.org/doi/10.1103/RevModPhys.89.041003}
}

@article{Brunner2014,
  title = {Bell nonlocality},
  author = {Brunner, Nicolas and Cavalcanti, Daniel and Pironio, Stefano and Scarani, Valerio and Wehner, Stephanie},
  journal = {Rev. Mod. Phys.},
  volume = {86},
  issue = {2},
  pages = {419--478},
  numpages = {60},
  year = {2014},
  month = {Apr},
  publisher = {American Physical Society},
  doi = {10.1103/RevModPhys.86.419},
  url = {https://link.aps.org/doi/10.1103/RevModPhys.86.419}
}

@article{gross2007lu,
  title={The LU-LC conjecture, diagonal local operations and quadratic forms over GF (2)},
  author={Gross, David and Nest, Maarten},
  journal={arXiv preprint arXiv:0707.4000},
  year={2007}
}

@article{dehaene2003clifford,
  title={Clifford group, stabilizer states, and linear and quadratic operations over GF (2)},
  author={Dehaene, Jeroen and De Moor, Bart},
  journal={Physical Review A},
  volume={68},
  number={4},
  pages={042318},
  year={2003},
  publisher={APS}
}

@article{PhysRevLett.118.090501,
  title = {Application of a Resource Theory for Magic States to Fault-Tolerant Quantum Computing},
  author = {Howard, Mark and Campbell, Earl},
  journal = {Phys. Rev. Lett.},
  volume = {118},
  issue = {9},
  pages = {090501},
  numpages = {6},
  year = {2017},
  month = {Mar},
  publisher = {American Physical Society},
  doi = {10.1103/PhysRevLett.118.090501},
  url = {https://link.aps.org/doi/10.1103/PhysRevLett.118.090501}
}

@article{VairogsYan2025,
  title = {Extracting randomness from magic quantum states},
  author = {Vairogs, Christopher and Yan, Bin},
  journal = {Phys. Rev. Res.},
  volume = {7},
  issue = {2},
  pages = {L022069},
  numpages = {6},
  year = {2025},
  month = {Jun},
  publisher = {American Physical Society},
  doi = {10.1103/3ttm-vhdt},
  url = {https://link.aps.org/doi/10.1103/3ttm-vhdt}
}

@article{ChowdhuryEtAl2025,
  title = {Gottesman-Knill limit on one-way communication complexity: Tracing the quantum advantage down to magic resources},
  author = {Chowdhury, Snehasish Roy and Naik, Sahil Gopalkrishna and Chakraborty, Ananya and Patra, Ram Krishna and Ghosh, Subhendu B. and Ghosal, Pratik and Banik, Manik and Maity, Ananda G.},
  journal = {Phys. Rev. A},
  volume = {113},
  issue = {4},
  pages = {042607},
  numpages = {11},
  year = {2026},
  month = {Apr},
  publisher = {American Physical Society},
  doi = {10.1103/x741-4kd1},
  url = {https://link.aps.org/doi/10.1103/x741-4kd1}
}

@article{LiuWinter2022,
  title = {Many-Body Quantum Magic},
  author = {Liu, Zi-Wen and Winter, Andreas},
  journal = {PRX Quantum},
  volume = {3},
  issue = {2},
  pages = {020333},
  numpages = {18},
  year = {2022},
  month = {May},
  publisher = {American Physical Society},
  doi = {10.1103/PRXQuantum.3.020333},
  url = {https://link.aps.org/doi/10.1103/PRXQuantum.3.020333}
}

@article{Russomanno2025,
  title = {Nonstabilizerness in the unitary and monitored quantum dynamics of XXZ-staggered and Sachdev-Ye-Kitaev models},
  author = {Russomanno, Angelo and Passarelli, Gianluca and Rossini, Davide and Lucignano, Procolo},
  journal = {Phys. Rev. B},
  volume = {112},
  issue = {6},
  pages = {064312},
  numpages = {11},
  year = {2025},
  month = {Aug},
  publisher = {American Physical Society},
  doi = {10.1103/njgn-fksh},
  url = {https://link.aps.org/doi/10.1103/njgn-fksh}
}

@article{TurkeshiDymarskySierant2025,
  title = {Pauli spectrum and nonstabilizerness of typical quantum many-body states},
  author = {Turkeshi, Xhek and Dymarsky, Anatoly and Sierant, Piotr},
  journal = {Phys. Rev. B},
  volume = {111},
  issue = {5},
  pages = {054301},
  numpages = {12},
  year = {2025},
  month = {Feb},
  publisher = {American Physical Society},
  doi = {10.1103/PhysRevB.111.054301},
  url = {https://link.aps.org/doi/10.1103/PhysRevB.111.054301}
}

@article{BeraSchiro2025,
title = {Non-stabilizerness of Sachdev-Ye-Kitaev model},
	pages = {159},
	author = {Bera, Surajit and Schirò, Marco},
	journal = {SciPost Phys.},
	volume = {19},
	year = {2025},
	publisher = {SciPost},
	doi = {10.21468/SciPostPhys.19.6.159},
	url = {https://scipost.org/10.21468/SciPostPhys.19.6.159}
}

@article{Hartse2025,
   title = {Stabilizer Scars},
  author = {Hartse, Jeremy and Fidkowski, Lukasz and Mueller, Niklas},
  journal = {Phys. Rev. Lett.},
  volume = {135},
  issue = {6},
  pages = {060402},
  numpages = {9},
  year = {2025},
  month = {Aug},
  publisher = {American Physical Society},
  doi = {10.1103/n5hb-l5p5},
  url = {https://link.aps.org/doi/10.1103/n5hb-l5p5}
}

@article{KorbanyGullansPiroli2025,
   title = {Long-Range Nonstabilizerness and Phases of Matter},
  author = {Korbany, David Aram and Gullans, Michael J. and Piroli, Lorenzo},
  journal = {Phys. Rev. Lett.},
  volume = {135},
  issue = {16},
  pages = {160404},
  numpages = {7},
  year = {2025},
  month = {Oct},
  publisher = {American Physical Society},
  doi = {10.1103/1hlj-h6t9},
  url = {https://link.aps.org/doi/10.1103/1hlj-h6t9}
}

@article{Veitch2014,
doi = {10.1088/1367-2630/16/1/013009},
url = {https://doi.org/10.1088/1367-2630/16/1/013009},
year = {2014},
month = {jan},
publisher = {IOP Publishing},
volume = {16},
number = {1},
pages = {013009},
author = {Veitch, Victor and Hamed Mousavian, S A and Gottesman, Daniel and Emerson, Joseph},
title = {The resource theory of stabilizer quantum computation},
journal = {New Journal of Physics}
}

@article{LeoneOlivieroHamma2022,
   title = {Stabilizer R\'enyi Entropy},
  author = {Leone, Lorenzo and Oliviero, Salvatore F. E. and Hamma, Alioscia},
  journal = {Phys. Rev. Lett.},
  volume = {128},
  issue = {5},
  pages = {050402},
  numpages = {5},
  year = {2022},
  month = {Feb},
  publisher = {American Physical Society},
  doi = {10.1103/PhysRevLett.128.050402},
  url = {https://link.aps.org/doi/10.1103/PhysRevLett.128.050402}
}

@article{Niroula2024,
   title={Phase transition in magic with random quantum circuits},
  author={Niroula, Pradeep and White, Christopher David and Wang, Qingfeng and Johri, Sonika and Zhu, Daiwei and Monroe, Christopher and Noel, Crystal and Gullans, Michael J},
  journal={Nature physics},
  volume={20},
  number={11},
  pages={1786--1792},
  year={2024},
  publisher={Nature Publishing Group UK London}
}

@article{Bejan2024,
  title={Dynamical magic transitions in monitored Clifford+ T circuits},
  author={Bejan, Mircea and McLauchlan, Campbell and B{\'e}ri, Benjamin},
  journal={PRX Quantum},
  volume={5},
  number={3},
  pages={030332},
  year={2024},
  publisher={APS}
}

@article{Fux2024,
   title={Entanglement--nonstabilizerness separation in hybrid quantum circuits},
  author={Fux, Gerald E and Tirrito, Emanuele and Dalmonte, Marcello and Fazio, Rosario},
  journal={Physical Review Research},
  volume={6},
  number={4},
  pages={L042030},
  year={2024},
  publisher={APS}
}

@article{Suzuki2025,
    title={Quantum complexity phase transitions in monitored random circuits},
  author={Suzuki, Ryotaro and Haferkamp, Jonas and Eisert, Jens and Faist, Philippe},
  journal={Quantum},
  volume={9},
  pages={1627},
  year={2025},
  publisher={Verein zur F{\"o}rderung des Open Access Publizierens in den Quantenwissenschaften}
}

@article{SierantTurkeshi2026,
  title={Theory of magic phase transitions in encoding-decoding circuits},
  author={Sierant, Piotr and Turkeshi, Xhek},
  journal={arXiv preprint arXiv:2603.00235},
  year={2026}
}

@article{Turkeshi2025Spreading,
 title={Magic spreading in random quantum circuits},
  author={Turkeshi, Xhek and Tirrito, Emanuele and Sierant, Piotr},
  journal={Nature Communications},
  volume={16},
  number={1},
  pages={2575},
  year={2025},
  publisher={Nature Publishing Group UK London}
}

@article{Leone2021Chaos,
   title={Quantum chaos is quantum},
  author={Leone, Lorenzo and Oliviero, Salvatore FE and Zhou, You and Hamma, Alioscia},
  journal={Quantum},
  volume={5},
  pages={453},
  year={2021},
  publisher={Verein zur F{\"o}rderung des Open Access Publizierens in den Quantenwissenschaften}
}

@article{Varikuti2026,
 title={Impact of Clifford operations on non-stabilizing power and quantum chaos},
  author={Varikuti, Naga Dileep and Bandyopadhyay, Soumik and Hauke, Philipp},
  journal={Quantum},
  volume={10},
  pages={2017},
  year={2026},
  publisher={Verein zur F{\"o}rderung des Open Access Publizierens in den Quantenwissenschaften}
}

@article{HeinrichGross2019,
 title={Robustness of magic and symmetries of the stabiliser polytope},
  author={Heinrich, Markus and Gross, David},
  journal={Quantum},
  volume={3},
  pages={132},
  year={2019},
  publisher={Verein zur F{\"o}rderung des Open Access Publizierens in den Quantenwissenschaften}
}

@article{Macedo2025,
   title = {Witnessing nonstabilizerness with Bell inequalities},
  author = {Mac\^edo, R. A. and Andriolo, P. and Zamora, S. and Poderini, D. and Chaves, R.},
  journal = {Phys. Rev. A},
  volume = {112},
  issue = {5},
  pages = {L050401},
  numpages = {6},
  year = {2025},
  month = {Nov},
  publisher = {American Physical Society},
  doi = {10.1103/6srg-723m},
  url = {https://link.aps.org/doi/10.1103/6srg-723m}
}

@article{Rains1997,
  title = {A Nonadditive Quantum Code},
  author = {Rains, Eric M. and Hardin, R. H. and Shor, Peter W. and Sloane, N. J. A.},
  journal = {Phys. Rev. Lett.},
  volume = {79},
  issue = {5},
  pages = {953--954},
  numpages = {0},
  year = {1997},
  month = {Aug},
  publisher = {American Physical Society},
  doi = {10.1103/PhysRevLett.79.953},
  url = {https://link.aps.org/doi/10.1103/PhysRevLett.79.953}
}

@article{Cross2009,
  author={Cross, Andrew and Smith, Graeme and Smolin, John A. and Zeng, Bei},
  journal={IEEE Transactions on Information Theory}, 
  title={Codeword Stabilized Quantum Codes}, 
  year={2009},
  volume={55},
  number={1},
  pages={433-438},
  doi={10.1109/TIT.2008.2008136}
}

@article{Chuang2009,
title = "Codeword stabilized quantum codes: Algorithm and structure",
author = "Isaac Chuang and Andrew Cross and Graeme Smith and John Smolin and Bei Zeng",
year = "2009",
doi = "10.1063/1.3086833",
language = "English",
volume = "50",
journal = "Journal of Mathematical Physics",
issn = "0022-2488",
publisher = "American Institute of Physics",
number = "4",
}

@article{YenHsu2009,
  author  = {Yen, Wen-Tai and Hsu, Li-Yi},
  title   = {Optimal Nonadditive Quantum Error-Detecting Code},
  journal = {arXiv preprint arXiv:0901.1353},
  year    = {2009}
}

@article{Baccari2020,
  title = {Scalable Bell Inequalities for Qubit Graph States and Robust Self-Testing},
  author = {Baccari, F. and Augusiak, R. and \ifmmode \check{S}\else \v{S}\fi{}upi\ifmmode \acute{c}\else \'{c}\fi{}, I. and Tura, J. and Ac\'{\i}n, A.},
  journal = {Phys. Rev. Lett.},
  volume = {124},
  issue = {2},
  pages = {020402},
  numpages = {6},
  year = {2020},
  month = {Jan},
  publisher = {American Physical Society},
  doi = {10.1103/PhysRevLett.124.020402},
  url = {https://link.aps.org/doi/10.1103/PhysRevLett.124.020402}
}

\end{document}